\newtheorem{problem}{Problem}
\newtheorem{theorem}{Theorem}
\newtheorem{lemma}[theorem]{Lemma}
\newtheorem{proposition}[theorem]{Proposition}
\newtheorem{example}{Example}
\newcommand{\USRIM}{\sf {US-RIM}}
\newcommand{\USRIMA}{\sf {US-RIM-A}}
\newcommand{\USRIMM}{\sf {US-RIM-M}}
\newcommand{\ICSRIM}{{\sf ICS-RIM}}
\newcommand{\OESRIM}{{\sf OES-RIM}}
\newcommand{\abs}[1]{\left| #1 \right|} 
\newcommand{\norm}[1]{\left\| #1 \right\|} 
\newcommand{\supp}{\textnormal{supp}} 
\newcommand{\E}{\mathbb{E}} 
\newcommand{\A}{\mathcal{A}} 
\newcommand{\greedy}{{\sf Greedy}} 
\newcommand{\lugreedy}{{\sf LUGreedy}}  
\newcommand{\lu}{\mathsf{LU}}
\newcommand{\NP}{{\bf NP}}
\DeclareMathOperator*{\argmax}{arg\,max}
\newcommand{\compilehidecomments}{false}
	\newcommand{\wei}[1]{}
	\newcommand{\tian}[1]{}
	\newcommand{\zihan}[1]{}
	\newcommand{\mingfei}[1]{}
	\newcommand{\xuren}[1]{}
\newcommand{\wei}[1]{{\color{blue}  [\text{Wei:} #1]}}
\newcommand{\tian}[1]{{\color{gray}  [\text{Tian:} #1]}}
\newcommand{\zihan}[1]{{\color{brown}  [\text{Zihan:} #1]}}
\newcommand{\mingfei}[1]{{\color{purple}  [\text{Mingfei:} #1]}}
\newcommand{\xuren}[1]{{\color{green}  [\text{Xuren:} #1]}}
\newcommand{\submissiontype}{kdd} 
\newcommand{\texkdd}[1]{
\ifthenelse{\equal{\submissiontype}{kdd}}{#1}{}%
}
\newcommand{\texarxiv}[1]{
\ifthenelse{\equal{\submissiontype}{arxiv}}{#1}{}%
}
\renewcommand{\submissiontype}{arxiv}
\def\@copyrightspace{\relax}
\begin{document}

\setcopyright{acmcopyright}

\doi{http://dx.doi.org/10.1145/2939672.2939745}

\isbn{978-1-4503-4232-2/16/08}

\conferenceinfo{KDD '16}{August 13--17, San Francisco, CA, USA}

\acmPrice{\$15.00}

%

\title{Robust Influence Maximization}

%
%
%
%
%

\numberofauthors{5} 
%
\author{
\alignauthor Wei Chen \\
\affaddr{Microsoft Research}\\
\email{weic@microsoft.com}
\and
\alignauthor Tian Lin \\
\affaddr{Tsinghua University}\\
\email{lint10@mails.tsinghua.edu.cn}
\and
\alignauthor Zihan Tan \\
\affaddr{IIIS, Tsinghua University}\\
\email{zihantan1993@gmail.com}
\and
\alignauthor Mingfei Zhao \\
\affaddr{IIIS, Tsinghua University}\\
\email{mingfeizhao@hotmail.com}
\and
\alignauthor Xuren Zhou \\
\affaddr{The Hong Kong University of Science and Technology}\\
\email{xzhouap@cse.ust.hk}
}

\maketitle

\sloppy

\begin{abstract}
	In this paper, we address the important issue of uncertainty in the edge influence
	probability estimates for the well studied influence maximization problem
	--- the task of finding $k$ seed nodes in a social network to maximize
	the influence spread.
	We propose the problem of robust influence maximization, which maximizes
	the worst-case ratio between the influence spread of the chosen seed set and
	the optimal seed set, given the uncertainty of the parameter input.
	We design an algorithm that solves this problem with a solution-dependent bound.
	We further study uniform sampling and adaptive sampling methods to
	effectively reduce the
	uncertainty on parameters and improve the robustness of
	the influence maximization task.
	Our empirical results show that parameter uncertainty may greatly affect influence
	maximization performance and prior studies that learned influence probabilities
	could lead to poor performance in robust influence maximization due to
	relatively large uncertainty in parameter estimates,
	and information cascade based adaptive sampling
	method may be an effective way to improve the robustness of influence maximization.
\end{abstract}

\printccsdesc


\keywords{social networks, influence maximization, robust optimization, information diffusion}

\section{Introduction} \label{sect:intro}
In social and economic networks, {\em Influence Maximization} problem has been extensively studied
over the past decade, due to its wide applications to viral marketing \cite{domingos2001mining,kempe2003maximizing}, outbreak detection \cite{leskovec2007cost},
rumor monitoring \cite{budak2011limiting}, etc.
For example, a company may conduct a promotion campaign in social networks
by sending free samples to the initial users (termed as seeds),
and via the word-of-mouth (WoM) effect, more and more users are influenced by social links to join the campaign
and propagate messages of the promotion.
This problem is first introduced by Kempe et al.~\cite{kempe2003maximizing} under an algorithmic framework to find the most influential seeds,
and they propose the {\em independent cascade} model and {\em linear threshold} model,
which consider the social-psychological factors of information diffusion to simulate such a random process of adoptions.

Since Kempe et al.'s seminal work, extensive researches have been done on influence
maximization, especially on improving the efficiency of influence maximization
in the independent cascade model~\cite{chen2009efficient, chen2010scalable, goyal2011celf,borgs14,tang14}, all of which
assume that the ground-truth influence probabilities on edges are exactly known.
Separately, a number of studies \cite{saito2008prediction,tang2009social,goyal2010learning,gomez2011uncover,Netrapalli12} propose learning methods to extract edge influence probabilities.
Due to inherent data limitation, no learning method could recover the
exact values of the edge probabilities, and what can be achieved is the estimates
on the true edge probabilities, with confidence intervals indicating that
the true values are within the confidence intervals with high probability.
The uncertainty in edge probability estimates, however, may adversely affect
the performance of the influence maximization task, but this topic has
left mostly unexplored.
The only attempt addressing this question is a recent study in~\cite{he2015stability},
but due to a technical issue as explained in~\cite{he2015stability},
the results achieved by the study is rather limited.

In this paper, we utilize the concept of robust optimization~\cite{ben2002robust}
in operation research to address the issue of influence maximization
with uncertainty.
In particular, we consider that the input to the influence maximization task is
no longer edge influence probability on every edge of a social graph, but instead
an interval in which the true probability may lie.
Thus the input is actually a parameter space $\Theta$, which is the product of
all intervals on all edges.
For any seed set $S$, let $\sigma_\theta(S)$ denote the {\em influence spread} of $S$
under parameter setting $\theta\in \Theta$.
Then we define {\em robust ratio} of $S$ as
$g(\Theta,S) = \min_{\theta\in \Theta} \frac{\sigma_{\theta}(S)}{\sigma_{\theta}(S^{*}_{\theta})} $,
where $S^{*}_{\theta}$ is the optimal seed set achieving the maximum influence
spread under parameter $\theta$.
Intuitively, robust ratio of $S$ indicates the (multiplicative) gap between
its influence spread and the optimal influence spread under the worse-case
parameter $\theta \in \Theta$, since we are unsure which $\theta \in \Theta$
is the true probability setting.
Then our optimization task is to find a seed set of size $k$ that
maximize the robust ratio under the known parameter space $\Theta$
--- we call this task {\em Robust Influence Maximization (RIM)}.

It is clear that when there is no uncertainty on edge probabilities, which means
$\Theta$ collapses to the single true parameter $\theta$,
RIM degenerates to the classical influence maximization problem.
However, when uncertainty exists, solving RIM may be a more difficult task.
In this paper, we first propose an algorithm {\lugreedy} that
solves the RIM task with a solution-dependent bound on
its performance, which means
that one can verify its performance after it selects the seed
set (Section~\ref{sect:rim}).
We then show that if the input parameter space $\Theta$ is only given and cannot
be improved, it is possible that even the best robust ratio
in certain graph instances could be very small (e.g. $O(\log n / \sqrt{n})$
with $n$ being the number of nodes in the graph).
This motivates us to study sampling methods to further tighten parameter
space $\Theta$, and thus improving the robustness of our algorithm
(Section~\ref{sect:sample}).
In particular, we study both uniform sampling and adaptive sampling for
improving RIM performance.
For uniform sampling, we provide theoretical results on the sample complexity
for achieving a given robust ratio of the output seed set.
For adaptive sampling, we propose an information cascade based sampling heuristic
to adaptively bias our sampling effort to important edges often traversed
by information cascades.
Through extensive empirical evaluations (Section~\ref{sect:experiments}), we show that
(a) robust ratio is sensitive to the width of the confidence interval, and
it decreases rapidly when the width of the confidence interval increases; as
a result prior studies that learned edge probabilities may result in poor robust
ratio due to relative large confidence intervals (and thus high uncertainty);
(b) information cascade based adaptive sampling method performs better than
uniform sampling and other baseline sampling methods, and can significantly
improve the robustness of the influence maximization task.

In summary, the contribution of our paper includes: (a) proposing the problem
of robust influence maximization to address the important issue of
uncertainty in parameter estimates adversely impacting
the influence maximization task;
(b) providing the {\lugreedy} algorithm that guarantees a solution-dependent
bound; and
(c) studying uniform and adaptive sampling methods to
improve robust influence maximization.

\texarxiv{Note that proofs of some technical results can be found in the appendix.}

\texkdd{Due to space constraint, the proofs of some technical results are omitted.
The complete proofs of all results can be found in the full technical report~\cite{ChenLTZZ16}.}

\subsection{Additional Related Work} \label{sect:related}

Influence maximization has been extensively studied and we already point out
a number of closely related studies to our work in the introduction.
For a comprehensive survey, one can refer to the monograph~\cite{chen2013information}.
We discuss a few most relevant work in more detail here.

To the best of our knowledge, the study by He and Kempe~\cite{he2015stability}
is the only attempt prior to our work that also tries to address the
issue of uncertainty
of parameter estimates impacting the influence maximization tasks.
However, besides the similarity in motivation, the technical treatments are quite
different.
First, their central problem, called influence difference maximization, is to
find a seed set of size $k$ that maximizes the additive difference between
the two influence spreads of the {\em same} seed set using different parameter values.
Their purpose is to see how large the influence gap could be due to the
uncertainty in parameter space.
However, our goal is still to find the best possible seed set for influence
maximization purpose, while considering the adverse effect of the uncertainty,
and thus we utilize the robust optimization concept and use the worse-case
multiplicative ratio between the influence spread of the chosen seed set and
the optimal seed set as our objective function.
Second, their influence difference maximization turns out to be hard to approximate
to any reasonable ratio, while we provide an actual algorithm for robust
influence maximization that has both a theoretical solution-dependent bound and
performs reasonably well in experiments.
Third, we further consider using sampling methods to improve RIM, which is not
discussed in~\cite{he2015stability}.

In the context of robust optimization, Krause et al.'s work on robust
submodular optimization~\cite{krause2008robust} is possibly the closest to ours.
Our RIM problem can be viewed as a specific instance of robust submodular
optimization studied in~\cite{krause2008robust}.
However, due to the generality of problem scope studied in~\cite{krause2008robust},
they show strong hardness results and then they have to resolve to
bi-criteria solutions.
Instead, we are working on a particular instance of robust submodular optimization,
and their bi-criteria solution may greatly enlarge the selected seed set size,
which may not be allowed in our case.
Furthermore, they work on finite set of submodular functions, but in our case
our objective function is parametrized with $\theta$ from a continuous
parameter space $\Theta$, and it is unclear how their results work for
the continuous case.

\texkdd{In a parallel work that will appear in the same proceeding, }
\texarxiv{In a parallel work, }
	He and Kempe study the
	same subject of robust influence maximization~\cite{HeKempe16}, but they follow
	the bi-criteria approximation approach of~\cite{krause2008robust}, and thus
	in general their results are orthogonal to ours.
In particular, they use essentially the same objective function, but they work on
	a finite set of influence spread functions $\Sigma$, and require to find
	$k\cdot \ln |\Sigma|$ seeds to achieve $1-1/e$ approximation ratio comparing to
	the optimal seed set of size $k$; when working on continuous parameter space
	$\Theta$, they show that it is equivalent to a finite spread function space
	of size $2^n$ and thus requiring $\varTheta(kn)$ seeds for a bi-criteria solution,
	which renders the bi-criteria solution useless.
Thus their bi-criteria approach is suitable when the set of possible spread functions
	$\Sigma$ is small.


Adaptive sampling for improving RIM bears some resemblance to pure exploration
bandit research~\cite{bubeck2011pure}, especially to combinatorial pure exploration
\cite{chen2014combinatorial} recently studied.
Both use adaptive sampling and achieve some optimization objective in the end.
However, the optimization problem modeled in combinatorial pure exploration
\cite{chen2014combinatorial} does not have a robustness objective.
Studying robust optimization together with combinatorial pure exploration
could be a potentially interesting topic for future research.
Another recent work \cite{lei2015online} uses online algorithms to maximize the 
	expected coverage of the union of influenced nodes in multiple rounds based on online
	feedbacks, and thus is different from our adaptive sampling objective: we use feedbacks
	to adjust adaptive sampling in order to find a seed set nearly maximizing
	the robust ratio after the sampling is done.

\section{Model and Problem Definition} \label{sect:model}

As in \cite{kempe2003maximizing}, the {\em independent cascade (IC)} model can be
equivalently modeled as a stochastic diffusion process from
seed nodes or as reachability from seed nodes
in random live-edge graphs.
For brevity, we provide the live-edge graph description below.
Consider a graph $G=(V,E)$ comprising a set $V$ of nodes and a set $E$ of directed edges,
where every edge $e$ is associated with probability $p_e \in [0,1]$, 
and let $n = |V|$ and $m = |E|$.
To generate a random live-edge graph, we declare each edge $e$
as {\em live}
if flipping a biased random coin with probability $p_e$ returns success,
declare $e$ as {\em blocked} otherwise (with probability $1-p_e$).
The randomness on all edges are mutually independent.
We define the subgraph $L$ consisting of $V$ and the set of
live edges as the (random) {\em live-edge graph}.
Given any set $S \subseteq V$ (referred as {\em seeds}), let $R_L(S) \subseteq V$ denote the {\em reachable set} of nodes from $S$ in live-edge graph $L$, i.e.,
(1) $S\subseteq R_L(S)$, and (2) for a node $v\notin S$, $v \in R_L(S)$ iff there is a path in $L$ directing from some node in $S$ to $v$.


For convenience, we use {\em parameter vector} $\theta=(p_e)_{e\in E}$ to denote the probabilities on all edges.
The {\em influence spread} function $\sigma_{\theta}(S)$ is defined as the expected size of the reachable set from $S$, that is
\[
\sigma_{\theta}(S) := \sum_{L}\Pr_{\theta}[L]\cdot |R_L(S)| \mbox{,}
\]
where $\Pr_{\theta}[L]$ is the probability of yielding live-edge graph $L$ under vector $\theta$.
From \cite{kempe2003maximizing}, we know that the influence spread function is non-negative ($\forall S \subseteq V$, $\sigma_{\theta}(S) \geq 0$), monotone ($\forall S \subseteq T \subseteq V$, $\sigma_{\theta}(S) \leq \sigma_{\theta}(T)$), and
submodular ($\forall S \subseteq T \subseteq V$, $\forall v \in V$ $\sigma_{\theta}(S \cup \{ v \}) - \sigma_{\theta}(S) \geq \sigma_{\theta}(T \cup \{ v \}) - \sigma_{\theta}(T)$).

The well-known problem  of \emph{Influence Maximization}
raised in \cite{kempe2003maximizing} is stated in the following.
\begin{problem}[Influence Maximization \cite{kempe2003maximizing}]
	Given a graph $G=(V,E)$, parameter vector $\theta=(p_e)_{e\in E}$ and a fixed budget $k$,
	we are required to find a seed set $S \subseteq V$ of $k$ vertices, such that the influence spread function $\sigma_{\theta}(S)$ is maximized, that is,
	\begin{align*}
	S^*_{\theta} := \argmax_{S \subseteq V, |S| = k} \sigma_{\theta}(S) \mbox{.}
	\end{align*}
\end{problem}
It has been shown that Influence Maximization problem is NP-hard \cite{kempe2003maximizing}.
Since the objective function $\sigma_{\theta}(S)$ is submodular,
we have a $(1-\frac{1}{e})$ approximation using standard greedy policy $\greedy(G,k,\theta)$ in Algorithm~\ref{alg:greedy} (assuming a value oracle on function $\sigma_\theta(\cdot)$).
Let $S^{g}_{\theta}$ be the solution of $\greedy(G,k,\theta)$.
As a convention, we assume that both optimal seed set $S^*_{\theta}$ and greedy seed set $S^{g}_{\theta}$
in this paper are of fixed size $k$ implicitly.

On the other hand, it is proved by Feige~\cite{feige1998threshold} that such an approximation ratio
could not be improved for $k$-max cover problem, which is a special case of the
influence maximization problem under the IC model.

\begin{algorithm}[t]
	\begin{algorithmic}[1]
		\REQUIRE Graph $G$, budget $k$, parameter vector $\theta$
		\STATE $S_0 \gets \emptyset$
		\FOR {$i = 1,2,\ldots,k$}
		\STATE $v \gets \argmax_{v \notin S_i} \left\{ \sigma_{\theta}(S_{i-1}\cup \{v\})-\sigma_{\theta}(S_{i-1}) \right\}$
		\STATE $S_i \gets S_{i-1} \cup \{v\}$
		\ENDFOR
		\RETURN $S_k$
	\end{algorithmic}
	\caption{{\sf Greedy}($G,k,\theta$)}
	\label{alg:greedy}
\end{algorithm}

However, the knowledge of the probability on edges is usually acquired by learning from the real-world
data \cite{saito2008prediction,tang2009social,goyal2010learning,gomez2011uncover,Netrapalli12},
and the obtained estimates always have some inaccuracy comparing to the
true value.
Therefore, it is natural to assume that, from observations of edge $e$,
we can obtain the statistically significant neighborhood $[l_e,r_e]$,
i.e., the {\em confidence interval} where the true probability $p_e$
lies in with high probability.
This confidence interval prescribes the uncertainty on the true probability
$p_e$ of the edge $e$, and such uncertainty on edges may adversely
impact the influence maximization task.
Motivated by this, we study the problem of {\em robust influence maximization}
as specified below.

%

Suppose for every edge $e$, we are given an interval $[l_e,r_e]$ ($0\le l_e\le r_e\le 1$)
indicating the range of the probability, and the ground-truth probability $p_e \in [l_e,r_e]$ of this edge is unknown.
Denote $\Theta=\times_{e\in E}[l_e,r_e]$ as the \emph{parameter space} of network $G$, and
$\theta=(p_e)_{e \in E}$ as the latent parameter vector.
Specifically, let $\theta^{-}(\Theta)=(l_e)_{e\in E}$ and $\theta^{+}(\Theta)=(r_e)_{e\in E}$ as the minimum and maximum parameter vectors, respectively, and when the context is clear, we would only use $\theta^{-}$ and $\theta^+$.
For a seed set $S \subseteq V$ and $|S| = k$, define its \emph{robust ratio} under
parameter space $\Theta$ as
\begin{equation}\label{robustratio}
g(\Theta,S) := \min_{\theta \in \Theta} \frac{\sigma_{\theta}(S)}{\sigma_{\theta}(S^{*}_{\theta})} \mbox{,}
\end{equation}
where $S^{*}_{\theta}$ is the optimal solution of size $k$ when the probability on every edge is given by $\theta$.

Given $\Theta$ and solution $S$, the robust ratio $g(\Theta,S)$
characterizes the {\em worst-case}
ratio of influence spread of $S$ and the underlying optimal one,
when the true probability vector $\theta$ is unknown (except knowing that
$\theta \in \Theta$).
Then, the {\em Robust Influence Maximization} (RIM) problem is defined as follows.
\begin{problem}[Robust Influence Maximization]
	Given a graph $G=(V,E)$, parameter space $\Theta=\times_{e\in E}[l_e,r_e]$ and a fixed budget $k$, we are required to find a set $S\subseteq V$ of $k$ vertices,
	such that robust ratio $g(\Theta,S)$ is maximized, i.e.,
	\begin{align*}
	S^*_{\Theta}
	:= \argmax_{S \subseteq V, |S| = k} g(\Theta,S)
	=  \argmax_{S \subseteq V, |S| = k} \min_{\theta\in \Theta}\frac{\sigma_{\theta}(S)}{\sigma_{\theta}(S^{*}_{\theta})} \mbox{.}
	\end{align*}
\end{problem}

%
The objective of this problem is to find a seed set $S^*_{\Theta}$ that has the largest robust ratio, that is, $S^*_{\Theta}$ should maximize the
worst-case ratio between its influence spread and the optimal influence spread,
when the true probability vector $\theta$ is unknown.
When there is no uncertainty, which means $\Theta$ collapses to the true
probability $\theta$, we can see that the RIM problem is reduced back
to the original influence maximization problem.

In RIM, the knowledge of the confidence interval is assumed to be the input.
Another interpretation is that, it can be viewed as given an estimate of probability vector
$\hat{\theta} = (\hat{p}_e)_{e \in E}$ with a perturbation level $\delta_e$ on each edge $e$,
such that the true probability $p_e \in [\hat{p}_e - \delta_e, \hat{p}_e + \delta_e] = [l_e, r_e]$,
which constitutes parameter space $\Theta = \times_{e \in E}[l_e,r_e]$.
Notice that, in reality, this probability could be obtained via edge samplings,
i.e., we make samples on edges and compute the fraction of times that the edge is live.
On the other hand, we can also observe information cascades on each edge when collecting
the trace of diffusion in the real world,
so that the corresponding probability can be learned.

However, when the amount of observed information cascade is small, the best robust ratio $\max_{S}g(\Theta,S)$ for the given $\Theta$ can be low so that the output for a RIM algorithm does not have a good enough guarantee of the performance in the worst case. Then a natural question is, given $\Theta$, how to further make samples on edges (e.g., activating source node $u$ of an edge $(u,v)$ and see if the sink node $v$ is activated through edge $e$) so that $\max_{S}g(\Theta,S)$ can be efficiently improved? To be specific, how to make samples on edges and output $\Theta'$ and $S'$ according to the outcome so that (a) with high probability the true value $\theta$ lies in the output parameter space $\Theta'$, where the randomness is taken according to $\theta$,
and (b) $g(\Theta',S')$ is large.
This sub-problem is called \emph{Sampling for Improving Robust Influence Maximization}, and will be addressed in Section~\ref{sect:sample}.

\section{Algorithm and Analysis for RIM} \label{sect:rim}


Consider the problem of RIM, parameter space $\Theta = \times_{e \in E}[l_e,r_e]$ is given, and we do not know the true probability $\theta \in \Theta$.
Let $\theta^{-}=(l_e)_{e\in E}$ and $\theta^{+}=(r_e)_{e\in E}$.

Our first observation is that, when $\Theta$ is a single vector ($l_e = r_e$, $\forall e \in E$),
it is trivially reduced to the classical Influence Maximization problem.
Therefore, we still have the following hardness result on RIM \cite{kempe2003maximizing,feige1998threshold}:

\begin{theorem} \label{pro:RIM-nphard}
	RIM problem is \NP-hard, and for any $\varepsilon > 0$, it is \NP-hard
	to find a seed set $S$ with robust ratio at least
	$1-\frac{1}{e} + \varepsilon$.
\end{theorem}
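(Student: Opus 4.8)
The plan is to reduce the classical Influence Maximization problem (and its hardness of approximation) to RIM by exhibiting a natural instance of RIM that coincides with ordinary Influence Maximization. First I would observe that RIM takes as input a parameter space $\Theta = \times_{e\in E}[l_e,r_e]$, and that ordinary Influence Maximization is exactly the degenerate case $l_e = r_e = p_e$ for every edge $e$. In that case $\Theta = \{\theta\}$ is a single point, so for any seed set $S$ the minimization in the definition of $g(\Theta,S)$ is over a single vector, giving $g(\Theta,S) = \sigma_\theta(S)/\sigma_\theta(S^*_\theta)$. Hence a seed set $S$ of size $k$ maximizes $g(\Theta,S)$ if and only if it maximizes $\sigma_\theta(S)$, i.e. $S^*_\Theta = S^*_\theta$, which is precisely the Influence Maximization optimum.

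Given this reduction, the \NP-hardness of RIM follows immediately from the \NP-hardness of Influence Maximization under the IC model, shown by Kempe et al.~\cite{kempe2003maximizing} (via $k$-max cover). For the inapproximability part, I would use the same reduction together with Feige's result~\cite{feige1998threshold}: $k$-max cover cannot be approximated to within $1-\frac1e+\varepsilon$ unless $\P = \NP$, and $k$-max cover embeds into IC-model Influence Maximization, so there is no polynomial-time algorithm that produces a seed set $S$ with $\sigma_\theta(S) \ge (1-\frac1e+\varepsilon)\,\sigma_\theta(S^*_\theta)$ unless $\P = \NP$. Since in the degenerate instance $g(\Theta,S) = \sigma_\theta(S)/\sigma_\theta(S^*_\theta)$, any algorithm finding a seed set with robust ratio at least $1-\frac1e+\varepsilon$ would directly yield such an approximation for Influence Maximization, a contradiction. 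Thus RIM is \NP-hard to approximate to within $1-\frac1e+\varepsilon$ for every $\varepsilon>0$.

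The argument is essentially a one-line observation plus citing two known results, so there is no serious technical obstacle. The only point that warrants a sentence of care is making sure the degenerate $\Theta$ is a legitimate RIM input (it is, since $0\le l_e = r_e \le 1$) and that the ``find a seed set with robust ratio $\ge 1-\frac1e+\varepsilon$'' statement is the right target: because $g(\Theta,S)\le 1$ always, and it equals $1$ exactly when $S$ is optimal, the threshold $1-\frac1e+\varepsilon$ is meaningful and matches the known approximation barrier. No further computation is needed.
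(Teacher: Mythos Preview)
Your proposal is correct and matches the paper's own treatment: the paper likewise observes that when $l_e = r_e$ for all $e$, RIM collapses to classical Influence Maximization, and then simply cites \cite{kempe2003maximizing} for \NP-hardness and \cite{feige1998threshold} for the $1-\frac{1}{e}+\varepsilon$ inapproximability. There is no additional argument in the paper beyond this one-line reduction.
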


To circumvent the above hardness result, we develop algorithms that achieves reasonably
large robust ratio.
When we are not allowed to make new samples on the edges to improve the input interval, it is natural to
utilize the greedy algorithm of submodular maximization in \cite{kempe2003maximizing} (i.e., Algorithm~\ref{alg:greedy})
as the subroutine to calculate the solution.
In light of this, we first propose Lower-Upper Greedy Algorithm and the solution-dependent bound for $g(\Theta,S)$,
and then discuss $g(\Theta,S)$ in the worst-case scenario.

\subsection{Lower-Upper Greedy Algorithm}

\begin{algorithm}[t]
	\begin{algorithmic}[1]
		\REQUIRE Graph $G = (V,E)$, budget $k$, parameter space $\Theta = \times_{e \in E}[l_e, r_e]$
		\STATE $S_{\theta^{-}}^g \gets \greedy(G,k,\theta^-)$ 
		\STATE $S_{\theta^{+}}^g \gets \greedy(G,k,\theta^+)$
		\RETURN $\argmax_{S \in \left\{ S_{\theta^{-}}^g, S_{\theta^{+}}^g \right\}} \left\{ \sigma_{\theta^-}(S) \right\}$
	\end{algorithmic}
	\caption{{\lugreedy}($G,k,\Theta$)}
	\label{alg:lugreedy}
\end{algorithm}

Given parameter space $\Theta=\times_{e\in E}[l_e,r_e]$ with the minimum and maximum parameter vectors $\theta^{-}=(l_e)_{e\in E}$ and $\theta^{+}=(r_e)_{e\in E}$,
our \emph{Lower-Upper Greedy algorithm} ($\lugreedy(G, k, \Theta)$) is described in Algorithm~\ref{alg:lugreedy}
which outputs the best seed set $S^\lu_\Theta$ for the minimum parameter vector $\theta^-$ such that
\begin{align}\label{def:lu-greedy-solution}
S^\lu_\Theta := \argmax_{S \in \left\{ S_{\theta^{-}}^g, S_{\theta^{+}}^g \right\}} 
\left\{ \sigma_{\theta^-}(S) \right\}.
\end{align}

To evaluate the performance of this output, we first define the \emph{gap ratio} $\alpha(\Theta) \in [0,1]$ of the input parameter space to be
\begin{equation} \label{eq:def-alpha}
\alpha(\Theta):=
\frac{\sigma_{\theta^{-}}(S^\lu_\Theta)}
{\sigma_{\theta^{+}}(S_{\theta^{+}}^g)}  \mbox{.}
\end{equation}
Then, {\lugreedy} achieves the following result:

\begin{theorem}[solution-dependent bound] \label{thm:main}
	Given a graph $G$, parameter space $\Theta$ and budget limit $k$, {\lugreedy} outputs a seed set
	$S^\lu_\Theta$ of size $k$ such that
	\begin{displaymath}
	g(\Theta, S^\lu_\Theta) \ge \alpha(\Theta)\left(1-\frac{1}{e}\right) \mbox{,}
	\end{displaymath}
	where $\alpha(\Theta):=
	\frac{\sigma_{\theta^{-}}(S^\lu_\Theta)}
	{\sigma_{\theta^{+}}(S_{\theta^{+}}^g)}$.
\end{theorem}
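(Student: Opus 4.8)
The plan is to lower-bound $\sigma_\theta(S^\lu_\Theta)$ and upper-bound $\sigma_\theta(S^*_\theta)$ by quantities that do not depend on the unknown $\theta$, so that their ratio can be controlled uniformly over $\theta\in\Theta$. First I would exploit monotonicity of $\sigma$ in the parameter vector: for every $\theta\in\Theta$ we have $\theta^-\le\theta\le\theta^+$ coordinatewise, and since making an edge more likely to be live can only enlarge (in the stochastic-dominance sense) the live-edge graph, for any fixed $S$ the map $\theta\mapsto\sigma_\theta(S)$ is monotone nondecreasing. Hence
\[
\sigma_\theta(S^\lu_\Theta)\ \ge\ \sigma_{\theta^-}(S^\lu_\Theta)
\qquad\text{and}\qquad
\sigma_\theta(S^*_\theta)\ \le\ \sigma_{\theta^+}(S^*_\theta).
\]
The first inequality handles the numerator. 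For the denominator I still need to relate $\sigma_{\theta^+}(S^*_\theta)$ to something computable. Here the key point is that $\sigma_{\theta^+}(S^*_\theta)\le\sigma_{\theta^+}(S^*_{\theta^+})$, because $S^*_{\theta^+}$ is by definition the size-$k$ maximizer of $\sigma_{\theta^+}$; and then the $(1-1/e)$ guarantee of \greedy\ (Algorithm~\ref{alg:greedy}) applied with parameter $\theta^+$ gives $\sigma_{\theta^+}(S^*_{\theta^+})\le\frac{1}{1-1/e}\,\sigma_{\theta^+}(S_{\theta^+}^g)$.

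Chaining these together, for every $\theta\in\Theta$,
\[
\frac{\sigma_\theta(S^\lu_\Theta)}{\sigma_\theta(S^*_\theta)}
\ \ge\ \frac{\sigma_{\theta^-}(S^\lu_\Theta)}{\sigma_{\theta^+}(S^*_{\theta^+})}
\ \ge\ \frac{\sigma_{\theta^-}(S^\lu_\Theta)}{\frac{1}{1-1/e}\,\sigma_{\theta^+}(S_{\theta^+}^g)}
\ =\ \alpha(\Theta)\left(1-\frac{1}{e}\right).
\]
Taking the minimum over $\theta\in\Theta$ on the left gives $g(\Theta,S^\lu_\Theta)\ge\alpha(\Theta)(1-\frac1e)$, which is exactly the claim. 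Note this argument does not even use the specific choice inside the $\argmax$ defining $S^\lu_\Theta$ except to know that $S^\lu_\Theta\in\{S^g_{\theta^-},S^g_{\theta^+}\}$ has size $k$; the $\argmax$ over $\{S^g_{\theta^-},S^g_{\theta^+}\}$ only serves to make $\sigma_{\theta^-}(S^\lu_\Theta)$, hence $\alpha(\Theta)$, as large as possible.

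The one step that needs genuine care — and which I expect to be the main obstacle to make fully rigorous — is the monotonicity of $\sigma_\theta(S)$ in $\theta$. This should follow from a coupling / monotone-coupling argument on the live-edge graphs: realize each edge $e$ via an independent uniform $U_e\in[0,1]$, declaring $e$ live iff $U_e\le p_e$; then raising $p_e$ only adds live edges, so the live-edge graph under $\theta$ is contained in that under $\theta'\ge\theta$, and therefore $|R_L(S)|$ is pointwise nondecreasing, giving $\sigma_\theta(S)\le\sigma_{\theta'}(S)$ after taking expectations. (This is essentially standard and already implicit in \cite{kempe2003maximizing}, so in the write-up I would state it as a short lemma or cite it.) Everything else — the optimality of $S^*_{\theta^+}$ for $\sigma_{\theta^+}$ and the $(1-1/e)$ greedy bound — is immediate from definitions and the results already quoted in the excerpt.
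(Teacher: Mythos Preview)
Your proof is correct and follows essentially the same route as the paper's own argument: use monotonicity of $\sigma_\theta(S)$ in $\theta$ to replace the numerator by $\sigma_{\theta^-}(S^\lu_\Theta)$, then bound the denominator via $\sigma_\theta(S^*_\theta)\le\sigma_{\theta^+}(S^*_\theta)\le\sigma_{\theta^+}(S^*_{\theta^+})\le\frac{1}{1-1/e}\sigma_{\theta^+}(S^g_{\theta^+})$, and combine. Your coupling justification for the monotonicity step is more explicit than the paper's (which simply asserts it as ``a fact''), but otherwise the structure is identical.
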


\begin{proof}
	For any seed set $S$,
	$g(\Theta,S) = \min_{\theta\in \Theta} \frac{\sigma_{\theta}(S)}{\sigma_{\theta}(S^{*}_{\theta})}$ 
	by definition.
	Obviously, it is a fact that $\sigma_\theta(S)$ is monotone on $\theta$ for any fixed $S$.
	From the definition of optimal solutions and the greedy algorithm, we can get
	$
	\sigma_{\theta}(S^{*}_{\theta}) \le \sigma_{\theta^+}(S^{*}_{\theta}) \le \sigma_{\theta^{+}}(S^{*}_{\theta^{+}})  \le \frac{ \sigma_{\theta^{+}}(S^{g}_{\theta^{+}}) }{1 - 1/e }.
	$
	Moreover, it can be implied that
	\begin{align*}
	g(\Theta,S) \ge \min_{\theta\in \Theta} \frac{ \sigma_{\theta}(S) }{ \sigma_{\theta^{+}}(S^{g}_{\theta^{+}}) } \left(1-  \frac{1}{e} \right)
	= \frac{ \sigma_{\theta^{-}}(S) }{ \sigma_{\theta^{+}}(S^{g}_{\theta^{+}}) } \left(1-  \frac{1}{e} \right).
	\end{align*}
	Use seed set $S^{\lu}_{\Theta}$ from {\lugreedy}, and it follows immediately that
	$
	g(\Theta, S^{\lu}_{\Theta}) 
	\ge \frac{ \sigma_{\theta^{-}}(S^{\lu}_{\Theta}) }{ \sigma_{\theta^{+}}(S^{g}_{\theta^{+}}) } \left(1-  \frac{1}{e} \right)
	= \alpha(\Theta) \left(1 - \frac{1}{e} \right).
	$
\end{proof}

%
%
%
%
%

We refer $\alpha(\Theta) (1- \frac{1}{e})$ as the {\em solution-dependent bound} of $g(\Theta,S^\lu_{\Theta})$  
that {\lugreedy} 
achieves, because it depends on the solution $S^\lu_{\Theta}$.
The good thing is that it can be evaluated once we have the solution, and then
we know the robust ratio must be at least this lower bound.
Note that the bound is good if $\alpha(\Theta)$ is not too small, and thus it in turn indicates that
the influence spread $\sigma_{\theta}(S^\lu_{\Theta})$
we find has a good performance under any probability vector $\theta \in \Theta$.

It is worth remarking that the choice of using 
$\alpha(\Theta) = \sigma_{\theta^{-}}(S^\lu_\Theta) / \sigma_{\theta^{+}}(S_{\theta^{+}}^g)$ 
as a measurement is for the following reasons:
(a) Intuitively, $S_{\theta^{-}}^g$ is expected to be the best possible seed set we can find that maximizes $\sigma_{\theta^{-}}(\cdot)$;
(b) Meanwhile, we consider $S_{\theta^{+}}^g$ as a potential seed set for
the later theoretical analysis (in the proof of \Cref{thm:uniform}), 
which requires the alignment of the same seed set for the numerator and denominator.
Thus, $\alpha(\Theta) \geq \max\{ \sigma_{\theta^{-}}(S_{\theta^{-}}^g), \sigma_{\theta^{-}}(S_{\theta^{+}}^g) \} / \sigma_{\theta^{+}}(S_{\theta^{+}}^g)$.
In particular, when $\theta^+$ and $\theta^-$ tend to the same value $\theta$, 
RIM is tending towards the classical Influence Maximization,
and thus the influence spread $\sigma_{\theta}(S^\lu_\Theta)$
can be close to the best possible result $\sigma_{\theta}(S_{\theta}^g)$. 
The approach adopted by {\lugreedy} is similar to the sandwich approximation used 
in~\cite{lu2015competition}.


The following example shows that for certain problem instances,
the gap ratio $\alpha(\Theta)$ of {\lugreedy} could match the robust ratio
$g(\Theta,S^\lu_\Theta)$, which also matches
the best possible robust ratio $\max_{|S|=k}g(\Theta,S)$.
\begin{example} \label{exp:tight}
	Consider a graph $G=(V,E)$ where the set of nodes are equally partitioned into $2k$ subsets $V=\cup_{i=1}^{2k}V_i$ such that every $V_i$ contains $t+1$ nodes.
	Let $V_i=\{v_i^{j}\mid 1\le j\le t+1\}$ and set $E=\cup_{i=1}^{2k}E_i$
	where $E_i=\{(v_i^{1},v_i^{j})\mid 2\le j\le t+1\}$.
	That is, every $(V_i,E_i)$ forms a star with $v_i^1$ being the node at the center,
	all stars are disconnected from one another.
	For the parameter space we set the interval on every edge to be $[l,r]$.
	When {\lugreedy} select $k$ nodes, since all $v^1_i$'s have the same (marginal)
	influence spread,  w.l.o.g., suppose that {\lugreedy} selects
	$\{v^1_1, v^1_2,\ldots, v^1_k \}$.
	Then if we set the true probability vector  $\theta\in \Theta$ such that $p_e=l$ for every $e \in \cup_{i=1}^{k} E_i$, and $p_e=r$
	for every $e\in \cup_{i=k+1}^{2k} E_i$, it is easy to check that
	$
		\max_{|S|=k}g(\Theta,S)=g(\Theta,S^\lu_{\Theta})=\alpha(\Theta)=\frac{1+tl}{1+tr}.
	$
\end{example}
The intuition from the above example is that, when there are many alternative choices
for the best seed set, and these alternative seed sets do not have much overlap
in their influence coverage, the gap ratio $\alpha(\Theta)$ is a good indicator
of the best possible robust ratio one can achieve.

%

In the next subsection, we will show that the best robust ratio could be very bad
for the worst possible graph $G$ and parameter space $\Theta$,
which motivates us
to do further sampling to improve $\Theta$.

\subsection{Discussion on the robust ratio} \label{sect:analysis-sub:hardness}


For the theoretical perspective, we show in this part that if we make no assumption or only add loose constraints to the input parameter space $\Theta$, then no algorithm will guarantee good performance for some worst possible graph $G$.
%

\begin{theorem}
	\label{thm:3-ratios}
	For RIM,
	\begin{enumerate}
		\item \label{pro:3-ratios-1}
		There exists a graph $G=(V,E)$ and parameter space $\Theta = \times_{e\in E} [l_{e}, r_{e}]$, such that
		\[
		\max_{|S|=k}g(\Theta,S)=\max_{|S|=k}\min_{\theta\in \Theta} \frac{\sigma_{\theta}(S)}{\sigma_{\theta}(S^{*}_{\theta})}=O\left(\frac{k}{n}\right) \mbox{.}
		\]
		
		\item \label{pro:3-ratios-2}
		There exists a graph $G=(V,E)$, constant $\delta = \varTheta\left(\frac{1}{n}\right)$ and
		parameter space $\Theta = \times_{e \in E}[l_{e}, r_{e}]$ where
		$r_{e} - l_{e} \leq \delta$ for every $e \in E$,
		such that
		\[
		\max_{|S|=k}g(\Theta,S)=O\left(\frac{\log n}{n}\right) \mbox{.}
		\]
		
		\item \label{pro:3-ratios-3}
		Consider random seeds set $\tilde{S}$ of size $k$.
		There exists a graph $G=(V,E)$, constant $\delta = \varTheta\left(\frac{1}{\sqrt{n}}\right)$ and
		parameter space $\Theta = \times_{e \in E}[l_{e}, r_{e}]$ where
		$r_{e} - l_{e} \leq \delta$ for every $e \in E$,
		we have
		\[
		\max_{\Omega} \min_{\theta\in\Theta}
		\E_{\tilde{S}\in \Omega}\left[
		\frac{\sigma_{\theta}(\tilde{S})}{\sigma_{\theta}(S_{\theta}^*)} \right]=O\left(\frac{\log n}{\sqrt{n}}\right) \mbox{,}
		\]
		where $\Omega$ is any probability distribution over seed sets of size $k$, and
		$\E_{\tilde{S}\in \Omega}[\cdot]$ is the expectation of random set
		$\tilde{S}$ taken from the distribution $\Omega$.
		
	\end{enumerate}
\end{theorem}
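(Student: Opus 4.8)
The three claims are all lower bounds, and I would prove them by exhibiting hard instances built from one template: the graph is a disjoint union of ``gadgets'' (or a single dense ``arena''), and against any candidate seed set the adversary answers by pushing every edge \emph{touched} by a seed to the bottom of its interval, thereby isolating the seeds, while pushing the edges of some untouched ``favorable'' region to the top of their intervals. Then $\sigma_{\theta}(S)$ is of order $k$ and $\sigma_{\theta}(S^*_{\theta})$ is of order the size of the favorable region. Part~\ref{pro:3-ratios-1} is immediate from this: take $2k$ pairwise-disconnected stars, each with one center and $\varTheta(n/k)$ leaves, every edge interval equal to $[0,1]$ (essentially Example~\ref{exp:tight} with $l=0$). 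Any $S$ with $|S|=k$ touches at most $k$ stars; zeroing those edges gives $\sigma_{\theta}(S)=k$, while raising $k$ untouched stars to probability $1$ gives $\sigma_{\theta}(S^*_{\theta})=\varTheta(n)$, so $g(\Theta,S)=O(k/n)$ for every $S$.

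For Part~\ref{pro:3-ratios-2} the obstacle is that the intervals must have width $\varTheta(1/n)$: the adversary can no longer switch an edge off, and since a simple path has length at most $n$, depressing every probability along a path changes reachability only by a constant factor. I would sidestep this by making the favorable region large and running it just above its percolation threshold: take a dense arena on $\varTheta(n)$ vertices with \emph{every} edge interval equal to $[0,c/n]$ for a fixed constant $c>1$. At the upper endpoint the arena is a supercritical Erd\H{o}s--R\'enyi-type graph, with a giant component of size $\varTheta(n)$ containing a fixed vertex with constant probability, so a seed there has influence spread $\varTheta(n)$; given $S$, the adversary zeros exactly the $O(kn)$ edges incident to $S$ (isolating the seeds, $\sigma_{\theta}(S)=k$) and leaves the internal edges of the remaining $n-k$ vertices at $c/n$, which is still supercritical when $k$ is not too large. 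This gives $g(\Theta,S)=O(k/n)$ for every $S$; since $\sigma_{\theta}(S)\ge k$ and $\sigma_{\theta}(S^*_{\theta})\le n$ force $g(\Theta,S)\ge k/n$ for every $S$, the bound is in fact tight, and at the scale $k=\varTheta(\log n)$ it reads $\varTheta(\log n/n)$ as stated. The technical content is a standard random-graph estimate (the giant component and the expected reach from a fixed vertex after deleting a small vertex set).

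For Part~\ref{pro:3-ratios-3} the seed set is randomized and the quantity is a $\max_\Omega\min_\theta$, so I would first turn it into a pure-adversary statement: for any fixed distribution $\mu$ on $\Theta$,
\begin{align*}
\max_\Omega \min_{\theta\in\Theta}\E_{\tilde{S}\in \Omega}\!\left[\frac{\sigma_{\theta}(\tilde{S})}{\sigma_{\theta}(S^*_{\theta})}\right]
&\le \max_\Omega \E_{\tilde{S}\in \Omega}\,\E_{\theta\sim\mu}\!\left[\frac{\sigma_{\theta}(\tilde{S})}{\sigma_{\theta}(S^*_{\theta})}\right] \\
&\le \max_{|S|=k}\E_{\theta\sim\mu}\!\left[\frac{\sigma_{\theta}(S)}{\sigma_{\theta}(S^*_{\theta})}\right],
\end{align*}
the first inequality because $\min_\theta\le\E_{\theta\sim\mu}$, the last because the inner expectation is linear in $\Omega$ and so maximized at a point mass. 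Hence it suffices to pick a graph (intervals of width $\varTheta(1/\sqrt n)$) and a distribution $\mu$ such that every fixed $k$-set has $\mu$-expected ratio $O(\log n/\sqrt n)$. I would partition $V$ into $\sqrt n$ blocks of $\sqrt n$ vertices, give intra-block edges interval $[0,c/\sqrt n]$ ($c>1$) and inter-block edges the interval $[0,0]$, and let $\mu$ choose one block uniformly to be favorable (its intra-block edges set to $c/\sqrt n$, everything else to $0$). The favorable block is a supercritical $G(\sqrt n,c/\sqrt n)$ with a giant component of size $\varTheta(\sqrt n)$, so $\sigma_{\theta}(S^*_{\theta})=\varTheta(\sqrt n)$; a fixed $k$-set occupies at most $k$ blocks, hence misses the favorable block with probability at least $1-k/\sqrt n$, in which event $\sigma_{\theta}(S)=k$ and the ratio is $O(k/\sqrt n)$, while in the remaining event the ratio is at most $1$ and occurs with probability $O(k/\sqrt n)$; altogether the $\mu$-expected ratio is $O(k/\sqrt n)$, which is $O(\log n/\sqrt n)$ at the scale of interest (residual $\log n$ factors again come from the random-graph estimates, e.g.\ if one insists on making the favorable block well-connected rather than merely percolated).

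The main obstacle is common to Parts~\ref{pro:3-ratios-2} and~\ref{pro:3-ratios-3}: engineering the gadget so that interval widths of only $\varTheta(1/n)$, respectively $\varTheta(1/\sqrt n)$, still produce a polynomially large multiplicative gap between the adversarially-isolated seed set and the adversarially-boosted optimum. This is precisely what forces the favorable region to be a $\varTheta(1/\delta)$-sized random graph run near criticality, and it also explains the jump in $\delta$: with width only $\varTheta(1/n)$ the favorable region must contain $\varTheta(n)$ vertices, so a randomized player lands in it with constant probability and the adversary cannot win; width $\varTheta(1/\sqrt n)$ lets the adversary randomize over $\sqrt n$ small favorable blocks and defeat any distribution over seed sets. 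All the quantitative work lives in these random-graph estimates; Part~\ref{pro:3-ratios-1} is routine by comparison.
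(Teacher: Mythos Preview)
Your argument is correct in all three parts, but Parts~\ref{pro:3-ratios-2} and~\ref{pro:3-ratios-3} diverge from the paper's route in an instructive way. (Part~\ref{pro:3-ratios-1} is the same idea with a different gadget: the paper uses an $n$-clique with every interval $[0,1]$ and zeros the edges incident to $S$; your $2k$ stars work equally well.)

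For Parts~\ref{pro:3-ratios-2} and~\ref{pro:3-ratios-3} the paper's intervals are $[p-\epsilon,\,p+\epsilon]$ centered at the critical probability (two $n/2$-cliques with $p=2/n$ in Part~\ref{pro:3-ratios-2}; $\sqrt{n}$ cliques of size $\sqrt{n}$ with $p=1/\sqrt{n}$ in Part~\ref{pro:3-ratios-3}), so $l_e>0$ on every edge. The adversary therefore cannot isolate the seeds; instead it pushes the block(s) containing them to the \emph{subcritical} side, where the spread is $O(\log n)$, and pushes one untouched block to the supercritical side. The $\log n$ in the stated bounds is exactly this subcritical component size. In Part~\ref{pro:3-ratios-3} the paper does not go through Yao's principle: for any seed distribution $\Omega$ it names a single adversarial $\theta$ (make the least-probable block supercritical, all others subcritical) and bounds $\E_{\tilde S\sim\Omega}[\sigma_\theta(\tilde S)]/\sigma_\theta(S^*_\theta)$ directly.

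Your constructions instead take $l_e=0$, which lets the adversary zero the edges incident to $S$ and isolate the seeds outright; percolation is invoked only on the favorable side. This is simpler and actually yields the stronger bounds $O(k/n)$ and $O(k/\sqrt{n})$; your remarks about ``the scale $k=\varTheta(\log n)$'' or ``residual $\log n$ factors from random-graph estimates'' are unnecessary, since for $k=1$ your bounds already dominate the stated ones. Your Yao-type reduction in Part~\ref{pro:3-ratios-3} (randomize $\theta$ over a uniformly chosen favorable block, then bound $\max_S\E_{\theta\sim\mu}[\cdot]$) is a clean alternative to the paper's direct adversarial choice. What the paper's approach buys is that its hard instances have $l_e>0$ everywhere, so the hardness does not hinge on being able to delete edges entirely---a somewhat more convincing demonstration that small interval width by itself does not rescue robustness.
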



In the first case, we allow the input $\Theta$ to be an arbitrary parameter space.
It is possible that $\Theta=\times_{e\in E}[0,1]$ for some graph $G$, which means
there is no knowledge at all for edge probabilities.
Then any seed set may achieve $O\left(\frac{k}{n}\right)$-approximation
of the optimal solution in the worst case.
Intuitively, a selected seed set $S$ may rarely activate other nodes (i.e., $O(k)$), while optimal solution (to the latent $\theta$) may cover almost the whole graph (i.e., $\Omega(n)$).

In the second case, an additional constraint is assumed on the parameter space
$\norm{\theta^+ - \theta^-}_{\infty} \leq \delta$, i.e.,
for every $e\in E$, $r_e-l_e\le \delta$, to see if we could obtain a better performance when $\delta$ is small.
However, even though $\delta$ is in the order of $O(1/{n})$, the robust ratio can be as small as $O(\log{n}/{n})$.
The proof is related to the phase transition in the Erd\H{o}s-R\'{e}nyi graph
for the emergence of giant component.
In particular, if we have a graph $G$ consisting of two disconnected, equal-sized
Erd\H{o}s-R\'{e}nyi random graphs with edge probabilities close to the
critical value of generating a giant connected component, then whenever
we select a seed in one component, that component could be just below
the threshold resulting in $O(\log n)$ influence spread while the other
component is just above the threshold leading to $\varTheta(n)$ influence spread.
Thus, the worst-case ratio for any one-node seed set is always
$O(\log{n}/{n})$.
A similar discussion can be found in \cite{he2015stability}.

In the third case, we allow the algorithm to be randomized,
namely the output seed set $\tilde{S}$ is a random set of size $k$.
Even in this case, the robust ratio could be as bad as
$O(\log n/\sqrt{n})$.

\if 0

\begin{proposition}
	\label{pro:no}
	If there is no constraint on input parameter space $\Theta$, then there exists a graph $G$ such that
	\[
	\max_{|S|=k}g(\Theta,S)=\max_{|S|=k}\min_{\theta\in \Theta} \frac{\sigma_{\theta}(S)}{\sigma_{\theta}(S^{*}_{\theta})}=O\left(\frac{k}{n}\right) \mbox{.}
	\]
\end{proposition}

\wei{The proof of the above proposition refers to an algorithm, which is wrong. This proposition
	should not be related to any algorithm. The proof is simply fixing any set $S$, there is always another
	$\theta$ that make the ratio very bad.}

\wei{However, from the above comment and the proof, I feel that something is strange.
	Why can we fix $S$ first, and then change $\theta$ accordingly to get the worst case ratio?
	Shouldn't that we should fix $\theta$ first, that is, it is the true value, and should not change
	if we change a set $S$?
	Thinking about playing a game with the adversary, what is the procedure? The adversary
	should first give a graph, and give the $\theta$, the graph is known to us, the player, but
	the $\theta$ is not known to us, only to the referee.
	Then we choose a set, and the adversary review the $\theta$ and show us the bound.
	So this is not as our defined ratio. Let's think about it more.}


Since $\Theta$ can be regarded as our knowledge about the transmitting probability on each edge, it is possible that the only knowledge we have is $\Theta=\times_{e\in E}[0,1]$.
This proposition shows that any seed set is possibly an $O\left(\frac{k}{n}\right)$-approximation of the optimal solution in the worst case of $\theta\in \Theta$.
Intuitively, any solution $S$ may only cover $O(k)$ nodes (in the order of the budget), while optimal solution (to the respective $\theta$)
may cover $\varTheta(n)$ (in the order of the whole nodes), for some graph $G$.

Now we may add additive constraints on the input parameter space $\norm{\theta^+ - \theta^-}_{\infty} \leq \delta$, i.e.,
for every $e\in E$, $r_e-l_e\le \delta$, to see if we could obtain a better performance when $\delta$ becomes smaller.
The following two propositions on the performance are hardness results when imposing certain additive constraints on $\Theta$.

\begin{proposition}
	\label{pro:1/n}
	When $\delta=\Omega(\frac{1}{n})$, then there exists $\Theta$ and a graph $G$ such that
	\[
	\max_{|S|=k}g(\Theta,S)=O\left(\frac{\log n}{n}\right) \mbox{.}
	\]
\end{proposition}


The proof is via a construction related to Erd\H{o}s-R\'{e}nyi graph. Similar discussion can be found in the introduction section of \cite{he2015stability}.

If we allow the algorithm to be randomized, namely the output seed set $\tilde{S}$ is a random variable whose possible outcome is any set of nodes with cardinality equal to $k$, the definition of optimal performance would be:
\[
r=\max_{\tilde{S}:|\supp(\tilde{S})|= k}\min_{\theta\in\Theta} \E_{\tilde{S}}\left[
\frac{\sigma_{\theta}(\tilde{S})}{\sigma_{\theta}(S_{\theta}^*)} \right] \mbox{,}
\]
where $\supp(\tilde{S})$ is the support of distribution $\tilde{S}$.

\begin{proposition}
	\label{pro:1/nr}
	When $\delta=\Omega(\frac{1}{n})$, for any distribution $\tilde{S}$, then there exists $\Theta$ and a graph $G$ such that
	\[
	\max_{\tilde{S}:|\supp(\tilde{S})|= k}\min_{\theta\in\Theta} \E_{\tilde{S}}\left[
	\frac{\sigma_{\theta}(\tilde{S})}{\sigma_{\theta}(S_{\theta}^*)} \right]=O\left(\frac{\log n}{\sqrt{n}}\right) \mbox{.}
	\]
\end{proposition}

\fi
%



\section{Sampling for Improving RIM}\label{sect:sample}

From the previous section, we propose {\lugreedy} algorithm to check the solution-dependent bound of the robust ratio,
and point out the worse-case bound could be small if $\Theta$ is not assumed to be tight enough.

Theorem~\ref{thm:3-ratios} in the previous subsection
points out that the best possible robust ratio $\max_{S}g(\Theta,S)$ can be
too low so that the output for RIM could not provide us with a satisfying seed set in the worst case.
Then a natural question is: given the input $\Theta$, can we make efficient samples
on edges
so that $\Theta$ is narrowed into $\Theta'$ (this means the true $\theta\in \Theta'$ with high probability) and then output a seed set $S'$ that makes $g(\Theta',S')$ large?
This problem is called \emph{Sampling for Improving RIM}.

In this section we study both uniform sampling and adaptive sampling
for improving RIM.
According to the Chernoff's bound, the more samples we make on an edge, the narrower
the confidence interval we get that guarantees the true probability
to be located within the confidence interval with a desired probability of
confidence.
After sampling to get a narrower parameter space, we could use 
{\lugreedy} algorithm 	
to get the seed set.

\subsection{Uniform Sampling} \label{sect:analysis-sub:Uniform}



In Sampling for improving RIM, the goal is to design a sampling and maximization algorithm $\A$ that outputs $\Theta'$ and $S'$ such that with high probability the robust ratio of $S'$ in $\Theta'$ is large.
After sampling edges, we can use Chernoff's bound to compute the confidence interval,
and the confidence interval can be further narrowed down with more samples.
However, the key issue is to connect the width of confidence interval with
the stability of influence spread.
We propose two ideas
exploiting properties of additive and multiplicative confidence interval respectively
to this issue, and incorporate into Uniform Sampling algorithm (in Algorithm~\ref{alg:uniform-sampling})
with theoretical justification (in Theorem~\ref{thm:uniform}).




Our first idea is inspired by the following lemma from \cite{ChenWY14a} to build the connection in the additive form.

\begin{lemma}[Lemma~7 in \cite{ChenWY14a}]
	\label{lem:add}
	Given graph $G$ and parameter space $\Theta$ such that $\forall \theta_1,\theta_2\in \Theta$,  $\norm{\theta_1-\theta_2}_{\infty}\leq \delta$, then, $\forall S\subseteq V$,
	\[
	\abs{\sigma_{\theta_1}(S)-\sigma_{\theta_2}(S)} \leq mn\delta \mbox{.}
	\]
\end{lemma}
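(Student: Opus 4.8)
The plan is to couple the live-edge graph processes under $\theta_1$ and $\theta_2$ so that the expected reachable sets differ only through a small number of ``discrepancy'' edges, and then to bound the effect of each such edge by $n$. First I would fix $S$ and write the influence spread as a sum over target nodes $v\in V$: $\sigma_{\theta_i}(S) = \sum_{v\in V} \Pr_{\theta_i}[v\in R_L(S)]$. It then suffices to show that for each fixed $v$, $\abs{\Pr_{\theta_1}[v\in R_L(S)] - \Pr_{\theta_2}[v\in R_L(S)]} \le m\delta$, since summing over the $n$ choices of $v$ yields the claimed bound $mn\delta$.

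To bound the per-node difference, I would introduce the standard monotone coupling: for each edge $e$ draw an independent uniform random variable $U_e\in[0,1]$, and in the process with parameter $\theta$ declare $e$ live iff $U_e \le p_e^{(\theta)}$. Under this coupling, order the edges $e_1,\dots,e_m$ arbitrarily and consider the hybrid parameter vectors $\phi_0 = \theta_1, \phi_1, \dots, \phi_m = \theta_2$, where $\phi_j$ agrees with $\theta_2$ on $e_1,\dots,e_j$ and with $\theta_1$ on the rest. For each $j$, the live-edge graphs under $\phi_{j-1}$ and $\phi_j$ can only differ in the status of edge $e_j$, and this happens precisely when $U_{e_j}$ falls between $p_{e_j}^{(\theta_1)}$ and $p_{e_j}^{(\theta_2)}$, an event of probability at most $\delta$ by hypothesis. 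On the complement, $v\in R_L(S)$ holds or fails simultaneously in both, so $\abs{\Pr_{\phi_{j-1}}[v\in R_L(S)] - \Pr_{\phi_j}[v\in R_L(S)]} \le \delta$. Summing the telescoping differences over $j=1,\dots,m$ gives the per-node bound $m\delta$, and combining with the sum over $v$ completes the proof. (Alternatively one can cite that a single edge's presence changes $|R_L(S)|$ by at most $n$ and run essentially the same hybrid argument directly on $\sigma$; both routes are routine.)

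The main obstacle — and really the only subtle point — is making the hybrid/coupling argument rigorous: one must verify that flipping one edge at a time between $\theta_1$ and $\theta_2$ is a legitimate interpolation (which requires exactly the $\ell_\infty$ hypothesis so that each single-edge swap moves the threshold by at most $\delta$), and that the event ``$v$'s reachability changes'' is contained in the event ``$e_j$'s liveness differs,'' which uses monotonicity of reachability in the edge set. Everything else is bookkeeping. Since the lemma is quoted verbatim from \cite{ChenWY14a}, I would in fact simply cite it; the sketch above is the argument behind it.
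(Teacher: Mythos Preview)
Your proposal is correct, and in fact you have done more than the paper does: the paper provides no proof of this lemma at all, treating it as a black-box citation of Lemma~7 in \cite{ChenWY14a} and moving directly to a tight example. Your coupling/hybrid argument is the standard route and is sound as written; the only thing to note is that you have correctly identified at the end that simply citing \cite{ChenWY14a} is exactly what the paper does, so there is nothing to compare against.
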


We use a tight example (in the order of $|V|$ and $|E|$) to illustrate the connection and give an insight of this lemma as follows.
Consider graph $G = (V,E)$ with $|V|=n$ and $|E| = m$ ($m \gg n$).
Let $G$ be two disjoint cycles, each containing exactly $\frac{n}{2}$ nodes and $\frac{n}{2}$ edges.
We arbitrarily assign the rest $m-n$ edges between two cycles.
Then, for every edge $e$ in the cycle, the interval is $l_e = r_e = 1$,
and $l_e = 0$, $r_e = \delta$ for those between two cycles,
which constitutes $\Theta = \times_{e \in E} [l_e, r_e]$.
Suppose $\delta > 0$ is sufficiently small, and let budget $k=1$. For any single-node set $S$,
it is easy to check that for $\theta^- = (l_e)_{e \in E}$, $\sigma_{\theta^-}(S) = \frac{n}{2}$,
and for $\theta^+ = (r_e)_{e \in E}$,
$\sigma_{\theta^+}(S) \approx \frac{n}{2} + \frac{n}{2} (m - n) \delta$,
thus $\abs{\sigma_{\theta^+}(S)-\sigma_{\theta^-}(S)} \approx \frac{1}{2} n(m-n)\delta$ in this case.
As a comparison, from Lemma~\ref{lem:add}, we know that $\abs{\sigma_{\theta^+}(S)-\sigma_{\theta^-}(S)} \leq m n \delta$.

Therefore, the above lemma establishes the guidance that we may sample every edge for sufficient times to shrink their confidence intervals in $\Theta$,
and feed {\lugreedy} with $\Theta$ as same as solving RIM, then the performance is guaranteed by Theorem~\ref{thm:main},
which matches our intuition that {\lugreedy} performs well with the satisfactory $\Theta$.

On the other hand, our second idea is to use the multiplicative confidence interval to reduce the fluctuation of influence spread,
then {\lugreedy} still applies.
The next lemma is crucial to achieve this goal.

\begin{lemma}
	\label{lem:mul}
	Given graph $G=(V,E)$ and parameter space $\Theta$. If there exists $\lambda \geq 0$, for all edge $e\in E$, s.t.,
	$r_e \leq (1+\lambda)l_e$, then for any nonempty set $S \subseteq V$,
	\begin{align}
	\frac{\sigma_{\theta^+}(S)}{\sigma_{\theta^-}(S)} \leq (1+\lambda)^{n} \mbox{,}
	\end{align}
	and
	\begin{equation}
	\max_{|S|= k}\min_{\theta\in\Theta}\frac{\sigma_{\theta}(S)}{\sigma_{\theta}(S_{\theta}^{*})}\geq (1+\lambda)^{-n} \mbox{.}
	\end{equation}
\end{lemma}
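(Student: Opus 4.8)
The plan is to prove the multiplicative stability bound $\sigma_{\theta^+}(S)/\sigma_{\theta^-}(S) \leq (1+\lambda)^n$ first, and then derive the robust-ratio consequence from it. For the first inequality, I would work with the live-edge graph formulation. Recall $\sigma_\theta(S) = \sum_L \Pr_\theta[L]\,|R_L(S)|$, where the sum is over all spanning subgraphs $L$ of $G$ and $\Pr_\theta[L] = \prod_{e\in L} p_e \prod_{e\notin L}(1-p_e)$. Fix a nonempty $S$. The key structural observation is that $|R_L(S)|$ depends only on which edges are live, and more importantly, for the purpose of reachability from $S$, I want to compare $\sigma_{\theta^+}$ and $\sigma_{\theta^-}$ by coupling. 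A cleaner route: condition on the reachable set. For any fixed target set $A\supseteq S$ with $|A|=a$, I would like to argue that $\Pr_{\theta^+}[R_L(S)=A] \le (1+\lambda)^{n}\Pr_{\theta^-}[R_L(S)=A]$ is too strong edge-by-edge; instead, the right comparison is via an edge-wise coupling that changes one edge probability at a time from $l_e$ to $r_e$.

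\medskip

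Concretely, order the edges $e_1,\dots,e_m$ and define a hybrid sequence of parameter vectors $\theta^{(0)}=\theta^-,\theta^{(1)},\dots,\theta^{(m)}=\theta^+$ where $\theta^{(i)}$ uses $r_{e_j}$ for $j\le i$ and $l_{e_j}$ for $j>i$. It suffices to bound $\sigma_{\theta^{(i)}}(S)/\sigma_{\theta^{(i-1)}}(S)$ for each $i$ and then multiply. Changing a single edge $e=e_i$ from probability $l_e$ to $r_e=(1+\lambda)l_e$: writing $\sigma_\theta(S)$ by conditioning on whether $e$ is live, $\sigma_\theta(S) = p_e\cdot\E[\,|R_L(S)|\mid e\text{ live}\,] + (1-p_e)\cdot\E[\,|R_L(S)|\mid e\text{ blocked}\,]$, where the conditional expectations do not depend on $p_e$ (by independence of edges). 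Since making $e$ live can only enlarge the reachable set, $\E[\,|R_L(S)|\mid e\text{ live}\,]\ge \E[\,|R_L(S)|\mid e\text{ blocked}\,] =: \beta$, and $\E[\,|R_L(S)|\mid e\text{ live}\,]=:\gamma\ge\beta$. Then $\sigma_{\theta^{(i)}}(S)/\sigma_{\theta^{(i-1)}}(S) = (r_e\gamma+(1-r_e)\beta)/(l_e\gamma+(1-l_e)\beta)$, and because the numerator and denominator are affine increasing in the edge probability with the same intercept behavior, this ratio is at most $r_e/l_e \le 1+\lambda$ — the cleanest way to see it is that $(x\gamma+(1-x)\beta)/x$ is decreasing in $x$ when $\gamma\ge\beta\ge 0$, so the ratio is maximized... actually I would instead bound it directly: $r_e\gamma+(1-r_e)\beta \le (1+\lambda)(l_e\gamma + (1-l_e)\beta)$ reduces to $(1+\lambda)(1-l_e)\beta - (1-r_e)\beta \ge 0$, i.e. $\beta((1+\lambda)(1-l_e)-(1-(1+\lambda)l_e))=\beta\lambda\ge 0$, which holds. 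Hence each factor is at most $1+\lambda$. But naively this gives $(1+\lambda)^m$, not $(1+\lambda)^n$.

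\medskip

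The main obstacle is getting the exponent down from $m$ to $n$. The fix is to not flip every edge, but to observe that $|R_L(S)|\le n$ always, and that the reachability set is determined by a subforest: more precisely, I expect the intended argument conditions on the reachable set growing one node at a time and charges a factor $(1+\lambda)$ per newly reached node rather than per edge. I would restructure: build $R_L(S)$ via a BFS/exploration process; each node $v\notin S$ that gets added is added because some edge into it from the already-reached set is live. Comparing the probability of the entire exploration transcript under $\theta^+$ versus $\theta^-$, the "live" edges that actually contribute to reaching new nodes number at most $n-1$, and each contributes a ratio $r_e/l_e\le 1+\lambda$; the "blocked" edges only make $\Pr_{\theta^+}$ smaller relative to $\Pr_{\theta^-}$ (since $1-r_e\le 1-l_e$), so they help. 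Summing $|R_L(S)|$ against these reweighted probabilities yields $\sigma_{\theta^+}(S)\le (1+\lambda)^{n}\sigma_{\theta^-}(S)$ (using $n-1\le n$). This transcript-coupling is the delicate step and where I would spend the most care.

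\medskip

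Granting the first inequality, the second is immediate: for any $S$ with $|S|=k$ and any $\theta\in\Theta$, monotonicity of $\sigma_{(\cdot)}(S)$ in the parameter gives $\sigma_\theta(S)\ge\sigma_{\theta^-}(S)$ and $\sigma_\theta(S^*_\theta)\le\sigma_{\theta^+}(S^*_\theta)$. Now take $S=S^*_{\theta^+}$ (optimal of size $k$ under $\theta^+$); then $\sigma_{\theta^+}(S^*_\theta)\le\sigma_{\theta^+}(S^*_{\theta^+})=\sigma_{\theta^+}(S)$, and combining with the first inequality applied to this $S$,
\[
\frac{\sigma_\theta(S)}{\sigma_\theta(S^*_\theta)} \ge \frac{\sigma_{\theta^-}(S)}{\sigma_{\theta^+}(S)} \ge (1+\lambda)^{-n}.
\]
Taking the min over $\theta\in\Theta$ and then the max over $|S|=k$ gives $\max_{|S|=k}\min_{\theta\in\Theta}\sigma_\theta(S)/\sigma_\theta(S^*_\theta)\ge(1+\lambda)^{-n}$, as claimed. (One should handle the degenerate case $l_e=0$ for some $e$ separately: then $r_e\le(1+\lambda)\cdot 0=0$, so that edge is deterministically absent and can be deleted, reducing to the case $l_e>0$ everywhere.)
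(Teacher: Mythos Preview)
Your proposal is correct, and the BFS-transcript coupling you sketch can be made rigorous exactly as you suggest: in the standard deferred-decision BFS from $S$, an edge $(u,v)$ is flipped only when $u$ is processed and $v$ is not yet reached, so every live flip adds a \emph{new} node and there are at most $|R_L(S)|-|S|\le n-1$ of them; blocked flips contribute a factor $(1-r_e)/(1-l_e)\le 1$. Hence $\Pr_{\theta^+}[\tau]\le (1+\lambda)^{n}\Pr_{\theta^-}[\tau]$ for every transcript $\tau$, and summing $|R(\tau)|$ against these gives the bound. Your derivation of the second inequality from the first via $S=S^*_{\theta^+}$ is fine and matches the paper.

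The paper's own proof of the first inequality is quite different. It first shows that the ratio $\sigma_{\theta^+}(S)/\sigma_{\theta^-}(S)$, viewed as a function of a single $l_e$ with $r_e=(1+\lambda)l_e$, is monotone and hence extremized at $l_e\in\{0,\tfrac{1}{1+\lambda}\}$; edges with $l_e=0$ are deleted, so one reduces to the case where every edge has interval $[\tfrac{1}{1+\lambda},1]$. It then removes cycles (arguing this can only increase the ratio) to obtain a DAG, splits into connected components, and in each component $T_i$ lower-bounds $\sigma_{\theta^-}$ by $\sum_{v\in R(S_i)}\beta^{d(S_i,v)}\ge |R(S_i)|\beta^{|T_i|}$ using shortest-path distances, while $\sigma_{\theta^+}=|R(S_i)|$. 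Your coupling argument sidesteps the extremization step, the DAG reduction, and the shortest-path analysis entirely; it is shorter and more directly probabilistic. The paper's route, on the other hand, makes the worst-case structure explicit (all-or-nothing edge intervals on a DAG), which could be informative for constructing tight examples.
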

In this lemma, the ratio of influence spread can be bounded based on the relation of $l_e$ and $r_e$ in the multiplicative form.


%
%

To unify both ideas mentioned above, we propose \emph{Uniform Sampling for RIM} algorithm ({\sf US-RIM}) in Algorithm~\ref{alg:uniform-sampling},
and the theoretical result is presented in Theorem~\ref{thm:uniform}.
Basically, the algorithm samples every edge with the same number of
times, 
and use {\lugreedy} to obtain the seed set.
We set different $t$ and $\delta_e$ for the two ideas.
Henceforth,
we explicitly refer the first setting as {\em Uniform Sampling with Additive form} ({\sf US-RIM-A}), and the second one as {\em Uniform Sampling with Multiplicative form} ({\sf US-RIM-M}).

\begin{algorithm}[t]
	\begin{algorithmic}[1]
		\REQUIRE Graph $G=(V,E)$, budget $k$, $(\epsilon, \gamma)$
		\ENSURE Parameter space $\Theta_{out}$, seed set $S_{out}$
		\FORALL{$e \in E$}
		\STATE Sample $e$ for $t$ times, and observe $x^1_e, \ldots, x^t_e$
		\STATE $p_e\gets\frac{1}{t}\sum_{i=1}^{t}x_e^i$, and set $\delta_e$ according to Theorem~\ref{thm:uniform}
		\STATE $r_e\gets\min\{1,p_e+\delta_e\}$, $l_e\gets\max\{0,p_e-\delta_e\}$
		\ENDFOR
		\STATE 
		$\Theta_\text{out} \gets \times_{e \in E} [l_e,r_e]$
		\STATE $S_\text{out} \gets \lugreedy(G,k,\Theta_\text{out})$ 
		\RETURN $(\Theta_\text{out}$,$S_\text{out})$
	\end{algorithmic}
	\caption{{\sf US-RIM}}
	\label{alg:uniform-sampling}
\end{algorithm}


\begin{theorem}\label{thm:uniform}
	Given a graph $G=(V,E)$, budget $k$, and accuracy parameter $\epsilon,\gamma>0$, let $n=|V|$ and $m=|E|$, then for any unknown ground-truth parameter vector $\theta=(p_e)_{e \in E}$, Algorithm {\sf US-RIM} outputs $(\Theta_\text{out}$,$S_\text{out})$ such that
	\[
	g(\Theta_\text{out}, S_\text{out})\ge \left(1-\frac{1}{e}\right)(1-\epsilon),
	\]
	with $\Pr[\theta \in \Theta_\text{out}]\ge 1-\gamma$,
	where the randomness is taken according to $\theta$,
	if we follow either of the two settings:
	\begin{enumerate}
		\item \label{thm-add-case}
		Set $t=\frac{2m^2n^2 \ln (2m/\gamma)}{k^2\epsilon^2}$, and for all $e$, set $\delta_e=\frac{k\epsilon}{mn}$;
		\item \label{thm-mul-case}
		Assume we have $p'$ such that $0 < p' \leq \min_{e\in E} p_e$,
		set $t=\frac{3 \ln (2m/\gamma)}{ p' } \left( \frac{2n}{\ln (1/1-\epsilon)} + 1 \right)^2$,
		and for all edge, set $\delta_e=\frac{1}{n} p_e\log\frac{1}{\gamma}$.
	\end{enumerate}
\end{theorem}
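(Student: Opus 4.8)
The plan is to split the claim into a deterministic performance bound $g(\Theta_\text{out},S_\text{out})\ge(1-1/e)(1-\epsilon)$ and a probabilistic coverage bound $\Pr[\theta\in\Theta_\text{out}]\ge 1-\gamma$, treating the two parameter settings in parallel as far as possible. Since {\lugreedy} is run on $\Theta_\text{out}$, Theorem~\ref{thm:main} already gives $g(\Theta_\text{out},S_\text{out})\ge\alpha(\Theta_\text{out})(1-1/e)$, where $\alpha(\Theta_\text{out})=\sigma_{\theta^-}(S^{\lu}_{\Theta_\text{out}})/\sigma_{\theta^+}(S^g_{\theta^+})$ and $\theta^{\pm}$ are now the endpoints of $\Theta_\text{out}$. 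Because $S^{\lu}_{\Theta_\text{out}}$ maximizes $\sigma_{\theta^-}(\cdot)$ over $\{S^g_{\theta^-},S^g_{\theta^+}\}$, we get $\alpha(\Theta_\text{out})\ge\sigma_{\theta^-}(S^g_{\theta^+})/\sigma_{\theta^+}(S^g_{\theta^+})$; passing to $S^g_{\theta^+}$ is precisely what aligns the same seed set in numerator and denominator, which is what Lemmas~\ref{lem:add} and~\ref{lem:mul} require. So it suffices to show $\sigma_{\theta^-}(S^g_{\theta^+})/\sigma_{\theta^+}(S^g_{\theta^+})\ge1-\epsilon$, and this is a purely deterministic fact about the \emph{shape} of $\Theta_\text{out}$, holding whether or not the samples actually captured $\theta$.

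For the additive setting, by construction $\norm{\theta^+-\theta^-}_\infty=\varTheta(k\epsilon/(mn))$, so Lemma~\ref{lem:add} gives $\sigma_{\theta^-}(S^g_{\theta^+})\ge\sigma_{\theta^+}(S^g_{\theta^+})-mn\norm{\theta^+-\theta^-}_\infty\ge\sigma_{\theta^+}(S^g_{\theta^+})-O(k\epsilon)$. The extra ingredient is the trivial bound $\sigma_{\theta^+}(S^g_{\theta^+})\ge|S^g_{\theta^+}|=k$ (a reachable set contains its seeds), which turns the additive slack $O(k\epsilon)$ into the multiplicative factor $1-\epsilon$ once the constant in $\delta_e$ is fixed. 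For the multiplicative setting I would use Lemma~\ref{lem:mul} instead: choosing $\delta_e\propto\hat p_e$ keeps $l_e>0$ and forces $r_e\le(1+\lambda)l_e$, where $\lambda$ is fixed (through the stated $t$ and $\delta_e$) so that $\ln(1+\lambda)\le\tfrac1n\ln\tfrac1{1-\epsilon}$, i.e.\ $(1+\lambda)^n\le1/(1-\epsilon)$; then Lemma~\ref{lem:mul} yields $\sigma_{\theta^-}(S^g_{\theta^+})/\sigma_{\theta^+}(S^g_{\theta^+})\ge(1+\lambda)^{-n}\ge1-\epsilon$. Either way, combining with the bound from Theorem~\ref{thm:main} gives $g(\Theta_\text{out},S_\text{out})\ge(1-\epsilon)(1-1/e)$ deterministically.

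For coverage, fix an edge $e$ and observe that $\hat p_e$ is the mean of $t$ i.i.d.\ Bernoulli$(p_e)$ samples; I would bound $\Pr[p_e\notin[l_e,r_e]]\le\gamma/m$ and union-bound over the $m$ edges (this is where the $\ln(2m/\gamma)$ in $t$ comes from, the $2$ absorbing a two-sided tail). In the additive setting this is Hoeffding, $\Pr[|\hat p_e-p_e|>\delta_e]\le2e^{-2t\delta_e^2}\le\gamma/m$ for the stated $t$; on the complementary event $p_e\in[\hat p_e-\delta_e,\hat p_e+\delta_e]$, and clamping to $[0,1]$ only shrinks the interval, so $p_e\in[l_e,r_e]$. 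In the multiplicative setting I would use the multiplicative Chernoff bound $\Pr[|\hat p_e-p_e|>\eta p_e]\le2e^{-t p_e\eta^2/3}\le2e^{-t p'\eta^2/3}\le\gamma/m$, using $p_e\ge p'$, with the relative accuracy $\eta$ the one implicit in the stated $t$; the event $|\hat p_e-p_e|\le\eta p_e$ then places $p_e$ inside the additively defined interval $[\hat p_e-\delta_e,\hat p_e+\delta_e]$, and a short computation confirms the value of $\lambda$ used above.

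The step I expect to be the main obstacle is calibrating the multiplicative setting: $\delta_e$ must be small enough that $r_e/l_e\le1+\lambda$ with $(1+\lambda)^n\le1/(1-\epsilon)$ (a deterministic constraint forcing relative width $\varTheta(\tfrac1n\ln\tfrac1{1-\epsilon})$), while $t$ must be large enough that $\hat p_e$ lands within that relative width of $p_e$ with probability $1-\gamma/m$ --- which is exactly why $t$ scales like $(n/\ln\tfrac1{1-\epsilon})^2\ln(2m/\gamma)/p'$ and why the uniform lower bound $p'$ on edge probabilities is needed (edges with tiny $p_e$ are hardest to estimate in relative error). The additive case is comparatively routine; its only real subtlety is recognizing that $\sigma_{\theta^+}(S^g_{\theta^+})\ge k$ is what upgrades Lemma~\ref{lem:add}'s additive stability to a multiplicative guarantee on $\alpha(\Theta_\text{out})$, and that the performance bound is consequently decoupled from the coverage event.
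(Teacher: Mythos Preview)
Your proposal is correct and follows essentially the same approach as the paper: reduce to bounding $\alpha(\Theta_{\text{out}})$ via Theorem~\ref{thm:main}, pass to $S^g_{\theta^+}$ to align the seed set in numerator and denominator, then invoke Lemma~\ref{lem:add} (together with $\sigma_{\theta^+}(S^g_{\theta^+})\ge k$) or Lemma~\ref{lem:mul} to obtain the $1-\epsilon$ factor, and handle coverage by Hoeffding/multiplicative Chernoff plus a union bound over the $m$ edges. Your explicit observation that the performance bound depends only on the deterministic \emph{shape} of $\Theta_{\text{out}}$ and is therefore decoupled from the coverage event is a nice clarification that the paper leaves implicit, but the underlying argument is the same.
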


In general, the total number of samples summing up all edges is $O(\frac{m^3n^2\log (m/\gamma)}{k^2\epsilon^2})$ for {\USRIMA},
and $O(\frac{mn^2 \log (m/\gamma)}{p' \epsilon^2})$ for {\USRIMM} with an additional constant $p'$, the lower bound probability on all edge probabilities.
The difference is that the former has a higher order of $m$, and the latter requires the knowledge of $p'$ and has an extra dependency on $O(1/p')$.
Since the sample complexity for both settings can be calculated in advance,
one may compare the values and choose the smaller one when running the uniform sampling algorithm.
An intuitive interpretation is that: (1) with high probability ($\geq 1-\gamma$),
the algorithm always outputs an $(1-\frac{1}{e}-\epsilon)$-approximation solution guaranteed by {\USRIMA};
(2) if $p'=\Omega(\frac{k^2}{m^2})$ (it is a loose assumption naturally satisfied in practice),
we may choose {\USRIMM} to achieve better sample complexity.



\subsection{Non-uniform and Adaptive Sampling} \label{sect:heuristic}

In a real network, the importance of edges in an influence diffusion process varies significantly.
Some edges may have larger influence probability than others or connect two important nodes in the network.
Therefore, in sampling it is crucial to sample edges appropriately.
Moreover, we can adapt our sampling strategy dynamically to put more sampling effort on
critical edges when we learn the edge probabilities
more accurately over time.

For convenience, given graph $G = (V, E)$, we define \emph{observation set} $\mathcal{M} = \left\{ M_e \right\}_{e \in E}$ as a collection of sets, where
$M_e = \{ x^1_e, x^2_e, \cdots, x^{t_e}_e \}$ denotes observed values of edge $e$ via the first $t_e$ samples on edge $e$.
We allow that a parameter space $\Theta_0 \subseteq \times_{e \in E} [0,1]$ is given,
which can be obtained by some initial samples $\mathcal{M}_0$
(e.g., uniformly sample each edge of the graph for a fixed number of times).
%
%
%
%

The following lemma is used to calculate the confidence interval, which
is a combination of additive and multiplicative Chernoff's Bound.
We adopt this bound in the experiment since some edges in the graph have large influence probability while others have small ones,
but using either additive or multiplicative bound may not be good enough
to obtain a small confidence interval.
The following bound is adapted from \cite{badanidiyuru2013bandit}
and is crucial for us in the experiment.


\begin{lemma} \label{lem:conf}
	For each $e \in E$, let $M_e = \left\{ x^1_e, x^2_e, \dots, x^{t_e}_e \right\}$ be samples of $e$ in $\mathcal M = \{ M_e \}_{e \in E}$, and $t_e$ be the sample number.
	Given any $\gamma > 0$, let confidence intervals for all edges be
	$\Theta = \times_{e\in E} [l_e, r_e]$, such that, for any $e \in E$,
	\begin{equation*}
	\begin{aligned}
	l_e &= \min\left\{\hat{p}_e + \frac{c_e^2}{2} - c_e\sqrt{\frac{c_e^2}{4} + \hat{p}_e}, ~0\right\}\\
	r_e &= \max\left\{\hat{p}_e + \frac{c_e^2}{2} + c_e\sqrt{\frac{c_e^2}{4} + \hat{p}_e}, ~1\right\},
	\end{aligned}
	\end{equation*}
	where $\hat{p}_e=\frac{ \sum_{i=1}^{t_e} x^{i}_e }{t_e}$, $c_e = \sqrt{\frac{3}{t_e} \ln\frac{2m}{\gamma}}$.
	Then, with probability at least $1-\gamma$, the true probability $\theta= \left(p_e\right)_{e\in E}$ satisfies that $\theta\in\Theta$.
\end{lemma}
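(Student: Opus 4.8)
The plan is to reduce the statement to a single edge and then apply a union bound: it suffices to prove that for each fixed $e\in E$ the true probability $p_e$ fails to lie in $[l_e,r_e]$ with probability at most $\gamma/m$, since then $\Pr[\theta\notin\Theta]\le\sum_{e\in E}\Pr[p_e\notin[l_e,r_e]]\le m\cdot(\gamma/m)=\gamma$. Throughout, the randomness is only in the samples $x_e^1,\dots,x_e^{t_e}$, which for a fixed $e$ are i.i.d.\ Bernoulli$(p_e)$, and $\hat p_e$ is their average.

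First I would record an exact algebraic description of the (unclipped) interval. Fix $e\in E$ and consider the quadratic $q(x)=(x-\hat p_e)^2-c_e^2x=x^2-(2\hat p_e+c_e^2)x+\hat p_e^2$. The quadratic formula gives its two roots as $\hat p_e+\tfrac{c_e^2}{2}\pm c_e\sqrt{\tfrac{c_e^2}{4}+\hat p_e}$, i.e.\ exactly the endpoints in the statement before clipping. Since $q$ is an upward parabola, $q(y)\le 0\iff l_e\le y\le r_e$, and since $p_e\in[0,1]$ already, intersecting the interval with $[0,1]$ cannot exclude $p_e$. Hence
\[
p_e\in[l_e,r_e]\iff(\hat p_e-p_e)^2\le c_e^2\,p_e\iff\abs{\hat p_e-p_e}\le c_e\sqrt{p_e},
\]
so the event $p_e\notin[l_e,r_e]$ is precisely the deviation event $\abs{\hat p_e-p_e}>c_e\sqrt{p_e}$.

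It then remains to bound $\Pr[\abs{\hat p_e-p_e}>c_e\sqrt{p_e}]$ by $\gamma/m$. Writing $S=\sum_{i=1}^{t_e}x_e^i$, $\mu=\E[S]=t_ep_e$, and setting $\epsilon=c_e/\sqrt{p_e}$, the choice $c_e=\sqrt{(3/t_e)\ln(2m/\gamma)}$ is made precisely so that $\epsilon^2\mu=c_e^2t_e=3\ln(2m/\gamma)$; the multiplicative Chernoff bound then yields $\Pr[S>(1+\epsilon)\mu]\le e^{-\epsilon^2\mu/3}=\gamma/(2m)$ and $\Pr[S<(1-\epsilon)\mu]\le e^{-\epsilon^2\mu/2}\le\gamma/(2m)$, and adding the two tails gives $\gamma/m$, which combined with the previous step and the union bound completes the argument. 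The one delicate point---and the main obstacle---is the regime of small true probability $p_e$ (equivalently $\epsilon=c_e/\sqrt{p_e}>1$), where the clean multiplicative Chernoff form above no longer applies and one must instead argue through the combined additive/multiplicative Chernoff estimate adapted from \cite{badanidiyuru2013bandit}; it is precisely the quadratic-in-$p$ shape of that estimate that gives rise to the shift $\tfrac{c_e^2}{2}$ and the $\tfrac{c_e^2}{4}$ term under the square root in the definitions of $l_e$ and $r_e$. Everything else---the root computation in the second step and the final union bound---is routine.
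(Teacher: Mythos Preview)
Your proposal is correct and follows essentially the same route as the paper's proof: fix an edge, apply the multiplicative Chernoff bound with $\delta=c_e/\sqrt{p_e}$ to obtain $\abs{\hat p_e-p_e}\le c_e\sqrt{p_e}$ with probability at least $1-\gamma/m$, solve the resulting quadratic inequality in $p_e$ to recover the stated interval, and finish with a union bound over $E$. The only difference in presentation is that you first identify the interval endpoints as roots of $q(x)=(x-\hat p_e)^2-c_e^2x$ and then invoke Chernoff, whereas the paper starts from Chernoff and then solves the constraint; the content is identical. Your flag about the small-$p_e$ regime (where $\epsilon=c_e/\sqrt{p_e}>1$ and the clean $e^{-\epsilon^2\mu/3}$ upper-tail form is not directly valid) is a fair point of care that the paper's own proof does not explicitly address.
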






Our intuition for non-uniform sampling is that the edges along the information cascade
of important seeds determine the influence spread, and henceforth they should be estimated more accurately than other edges not along important information
cascade paths.
Thus, we use the following \emph{Information Cascade Sampling} method to select edges.
Starting from the seed set $S$, once node $v$ is activated, $v$ will try to activate its out-neighbors.
In other words, 
for every out-edge $e$ of $v$, denote $t_e$ as the number of samples,
then $e$ will be sampled once to generate a new observation $x_e^{t_e}$ based on the latent Bernoulli distribution with success probability $p_e$,
and $t_e$ will be increased by $1$.
The process goes on until the end of the information cascade.

We propose \emph{Information Cascade Sampling for RIM} ({\ICSRIM}) algorithm in Algorithm~\ref{alg:information-cascade-sampling},
which adopts information cascade sampling described above to select edges.

\begin{algorithm}[t]
	\begin{algorithmic}[1]
		\REQUIRE Graph $G=(V,E)$, budget $k$, initial sample $\mathcal{M}_0$, threshold $\kappa$, $\gamma$.
		\ENSURE Parameter space $\Theta_\text{out}$, seed set $S_\text{out}$
		\STATE $i\gets 0$
		\REPEAT
		\STATE Get $\Theta_i$ based on $\mathcal{M}_i$ (see Lemma~\ref{lem:conf}).
		\STATE $S^\lu_{\Theta_i} = \lugreedy(G,k, \Theta_i)$ 
		\STATE $\mathcal{M}_{i+1} \gets \mathcal{M}_i$
		\FOR{$j=1,2,\ldots,\tau$}
		\STATE Do information cascade with the seed set $S^\lu_{\Theta_i}$ 
		\STATE During the cascade, once $v\in V$ is activated, sample all out-edges of $v$ and update $\mathcal{M}_{i+1}$
		\ENDFOR
		\STATE $i\gets i+1$
		\UNTIL{$\alpha(\Theta_i)>\kappa$}
		\STATE $S_\text{out} \gets S^\lu_{\Theta_{i-1}}$ 
		\STATE $\Theta_\text{out} \gets \Theta_{i-1}$
		\RETURN $(\Theta_\text{out}, S_\text{out})$
	\end{algorithmic}
	\caption{{\ICSRIM}$(\tau)$: Information Cascade Sampling}
	\label{alg:information-cascade-sampling}
\end{algorithm}

Algorithm~\ref{alg:information-cascade-sampling} is an iterative procedure. In the $i$-th iteration,
Lemma~\ref{lem:conf} is used to compute the confidence interval $\Theta_i$ from observation set $\mathcal{M}_i$.
Then according to $\Theta_i$, we find the lower-upper greedy set $S^{\lu}_{\Theta_{i}}$ and use information cascade to
update observation set $\mathcal{M}_{i+1}$ by absorbing new samples.

Since the robust ratio $g(\Theta, S^{\lu}_{\Theta_{i}})$ cannot be calculated efficiently, we will calculate $\alpha(\Theta)$ (defined in \eqref{eq:def-alpha}) instead.
In our algorithm, we use a pre-determined threshold $\kappa$ ($\kappa \in (0,1)$) as the stopping criteria.
Therefore, for $S_\text{out}$, the robust ratio $g(\Theta,S_\text{out})\ge \alpha(\Theta)\left(1-\frac{1}{e}\right) > \kappa\left(1-\frac{1}{e}\right)$ is guaranteed by Theorem~\ref{thm:main},
and the true probability $\theta \in \Theta_\text{out}$ holds with probability at least $1-\gamma$ due to Lemma~\ref{lem:conf}. 

%

Compared with information cascade sampling method, calculating a greedy set is time-consuming.
Therefore in Algorithm \ref{alg:information-cascade-sampling}, we call $\lugreedy$ once every $\tau$ rounds of information cascades
to reduce the cost.

\section{Empirical Evaluation} \label{sect:experiments}

We conduct experiments on two datasets,
Flixster\footnote{http://www.cs.sfu.ca/$\sim$sja25/personal/datasets/} and 
NetHEPT\footnote{http://research.microsoft.com/en-us/people/weic/projects.aspx} 
to verify the robustness of influence maximization and our sampling methods.

%

\subsection{Experiment Setup}

\subsubsection{Data Description}

\paragraph*{Flixster}
The Flixster dataset is a network of American social movie discovery service (www.flixster.com). To transform the dataset into a weighted graph, each user is represented by a node, and a directed edge from node $u$ to $v$ is formed
if $v$ rates one movie shortly after $u$ does so on the common movie.
The dataset is analyzed in \cite{barbieri2013topic}, and the influence probability are learned by the topic-aware model.
We use the learning result of \cite{barbieri2013topic} in our experiment, which is a graph containing 29357 nodes and 212614 directed edges.
There are 10 probabilities on each edge, and each probability represents the influence from the source user to the sink on a specific topic.
Since most movies belong to at most two topics,
we only consider 3 out of 10 topics in our experiment, and get two induced graphs whose number of edges are 23252 and 64934 respectively. For the first graph, probabilities of topic 8 are directly used as the ground truth parameter (termed as Flixster(Topic~8)).
For the second graph, we mix the probabilities of Topic 1 and Topic 4 on each edge evenly to obtain the ground-truth probability (termed as as Flixster(Mixed)).
After removing isolated nodes, the number of nodes in the two graphs are 14473 and 7118 respectively.

In~\cite{barbieri2013topic}, the probability for every edge $(u,v)$ is learned
by rating cascades that reach $u$ and may or may not reach $v$, and in this
cases we view that edge $(u,v)$ are sampled.
According to the data reported in~\cite{barbieri2013topic}, on average
every edge is sampled $318$ times for their learning process.
We then use $318$ samples on each edge as our initial sample
${\cal M}_0$.

\paragraph*{NetHEPT}
The NetHEPT dataset \cite{chen2009efficient} is extensively used in many influence
maximization studies.
It is an academic collaboration network from the "High Energy
Physics-Theory" section of arXiv form 1991 to 2003, where nodes represent the authors
and each edge in the network represents one paper co-authored by two nodes.
It contains $15233$ nodes and $58891$ undirected edges (including duplicated edges).
We remove those duplicated edges and obtain a directed graph $G=(V,E), |V|=15233, |E|=62774$ (directed edges).
Since the NetHEPT dataset does not contain the data of influence probability on edges,
we set the probability on edges according to the \emph{weighted cascade} model \cite{kempe2003maximizing}
as the ground truth parameter, i.e.,
$\forall e = (v, u)\in E$, let $x_u$ be the in-degree of $u$ in the
edge-duplicated graph, $y_{e}$ be the number of edges connecting node $v$ and $u$,
then the true probability is $p_e = 1 - (1-\frac{1}{x_u})^{y_e}$.
%
Following the same baseline of Flixster, we initially sample each edge 
for 318 times as $\mathcal{M}_0$. 

\subsubsection{Algorithms}
\label{algorithms}

We test both the uniform sampling algorithm {\USRIM} and the adaptive sampling
algorithm {\ICSRIM}, as well as another adaptive algorithm
{\OESRIM} (Out-Edge Sampling) as the baseline (to be described shortly).
Each algorithm is given a graph $G$ and initial observation set $\mathcal{M}_0$.
Note that the method to estimate the parameter space based on sampling results in Algorithm~\ref{alg:uniform-sampling} and Algorithm~\ref{alg:information-cascade-sampling} are different. In order to make the comparison meaningful, in this section, for all three algorithms, a common method according to Lemma~\ref{lem:conf} is used to estimate the parameter space. In all tests, we set the size of the seed set $k=50$. To reduce the running time, we use a faster approximation algorithm PMIA (proposed in \cite{chen2010scalable}) to replace the well known greedy algorithm purposed in \cite{kempe2003maximizing} in the whole experiment. The accuracy requirement $\gamma=o(1)$ is set to be $\gamma=m^{-0.5}$ where $m$ is the number of edges.

\paragraph*{\USRIM}
The algorithm is slightly modified from Algorithm~\ref{alg:uniform-sampling} for a better  comparison of performance. The modified algorithm proceeds in an iterative fashion: In each iteration, the algorithm makes $\tau_1$ 
samples on each edge, updates $\Theta$ according to Lemma~\ref{lem:conf} and computes $\alpha(\Theta)$. The algorithm stops when $\alpha(\Theta)\ge \kappa=0.8$.
$\tau_1$ is set to 1000, 1000, 250 for NetHEPT, Flixster(Topic~8), Flixster(Mixed), respectively
to achieve fine granularity 
and generate visually difference of $\alpha(\Theta)$ in our results.

\paragraph*{\ICSRIM}
As stated in Algorithm~\ref{alg:information-cascade-sampling}, in each iteration, the algorithm do 
$\tau_2 = 5000$ 
times information cascade sampling based on the seed set from the last iteration,
and then it updates $\Theta$ according to Lemma~\ref{lem:conf}, computes $\alpha(\Theta)$ and uses {\lugreedy} algorithm to compute the seed set for the next round. The algorithm stops when $\alpha(\Theta)\ge \kappa=0.8$.

\paragraph*{\OESRIM}
This algorithm acts as a baseline, and it proceeds in the similar way to {\ICSRIM}.
Instead of sampling information cascades starting from the current seed set
as in {\ICSRIM}, {\OESRIM} only sample {\em out-edges} from the seed set.
More specifically, in each iteration, the algorithm samples $5000$ 
times of all out-edges of the seed set from last iteration, for the three graphs respectively, and then it updates $\Theta$ according to Lemma~\ref{lem:conf}, computes $\alpha(\Theta)$ and uses {\lugreedy} algorithm to compute the seed set for the next round.
Note that for {\OESRIM}, $\alpha(\Theta)$ remains small (with the increase of the number of samples) and cannot exceed the threshold $\kappa$ even the iteration has been processed for a large number of times,
therefore we will terminate it when $\alpha(\Theta)$ is stable.

\subsubsection{$\bar{\alpha}$ as a Upper Bound}
Theorem~\ref{thm:main} shows that $\alpha(\Theta)\left(1-\frac{1}{e}\right)$ is
a lower bound for the robust ratio $g(\Theta,S^\lu_{\Theta})$.
We would also like to find some upper bound of $g(\Theta, S^\lu_{\Theta})$:
If the upper bound is reasonably close to the lower bound or match in trend of
changes,  it indicates that $\alpha(\Theta)\left(1-\frac{1}{e}\right)$ is
a reasonable indicator of the robust ratio achieved by the {\lugreedy}
output $S^\lu_{\Theta}$.
For any $\theta\in \Theta$, we define $\bar{\alpha}(\Theta, \theta)
= \frac{\sigma_{\theta}\left( ^\lu_{\Theta} \right)}{\sigma_{\theta}(S_{\theta}^g)}$.
The following shows that $\bar{\alpha}(\Theta, \theta)$ is an upper bound for
$g(\Theta,S^\lu_{\Theta})$:
\begin{equation*}
\bar{\alpha}(\Theta, \theta) =
\frac{\sigma_{\theta}(S^\lu_{\Theta})}{\sigma_{\theta}(S_{\theta}^g)}
\ge
\frac{\sigma_{\theta}(S^\lu_{\Theta})}{\sigma_{\theta}(S^{*}_{\theta})}
\ge
\min_{\theta' \in \Theta} \frac{\sigma_{\theta'}(S^\lu_{\Theta})}{\sigma_{\theta'}(S^{*}_{\theta'})}
=
g(\Theta, S^\lu_{\Theta}) \mbox{.}
\end{equation*}


The next question is how to find a $\theta=(\theta_e)_{e\in E}\in \Theta$
to make the upper bound
$\bar{\alpha}(\Theta,\theta)$ as small as possible.
In our experiments, we use the following two heuristics and take their minimum.

The first heuristic borrows the intuition from Example~\ref{exp:tight}, which
says that the gap ratio $\alpha(\Theta)$ is close to the robust ratio
$g(\Theta, S^\lu_{\Theta})$ when (a) there are two disjoint seed sets with
similar influence spead, (b) their cascade overlap is small, and
(c) the reachable edges from one seed set use lower end parameters values while the reachable edges from the other seed set use upper end parameters.
Thus in our heuristic, we use PMIA algorithm to find another seed set $S'$
of $k$ nodes
when we remove all nodes in $S^\lu_{\Theta}$.
We then do information cascades from both $S^\lu_{\Theta}$ and $S'$ for an
equal number of times.
Finally, for every edge $e$, if it is sampled more in the information cascade with seed set $S^\lu_{\Theta}$ than with $S'$, we set $\theta_e=l_e$, otherwise we set $\theta_e=r_e$.
The second heuristic is a variant of the first one, where we run a number of
information cascades from $S^\lu_{\Theta}$, and for any edge $e$
that is sampled in at least $10\%$ of cascades, we set $\theta_e=l_e$,
otherwise we set $\theta_e=r_e$.

Other more sophisticated heuristics are possible, but it could be a separate
research topic to find tighter upper bound for the robust ratio, and thus
we only use the simple combination of the above two in this
paper, which is already indicative.
We henceforth use $\bar{\alpha}(\Theta)$ to represent
the upper bound found by the minimum of the above two heuristics.

\subsection{Results}

\subsubsection{$\alpha(\Theta)$ and $\bar{\alpha}(\Theta)$
	with Predetermined Intervals}

In the first experiment we explore the relationship between
the width of confidence interval $\Theta=\times_{e\in E}[l_e,r_e]$ and $\alpha(\Theta)$ together
with $\bar{\alpha}(\Theta)$.
For a given interval width $W$,
we set $l_e=\min\{p_e-\frac{W}{2},0\},r_e=\max\{p_e+\frac{W}{2},1\}$
$\forall e\in E$, where $p_e$ is the ground-truth probability of $e$.
Then we calculate $\alpha(\Theta)$ and $\bar{\alpha}(\Theta)$.
We vary the width $W$ to see the trend of changes of
$\alpha(\Theta)$ and $\bar{\alpha}(\Theta)$.
Figure~\ref{fig1} reports the result on the three graphs with seed set size $k=50$.

\begin{figure}[t]
	\centering
	\includegraphics[scale=0.45]{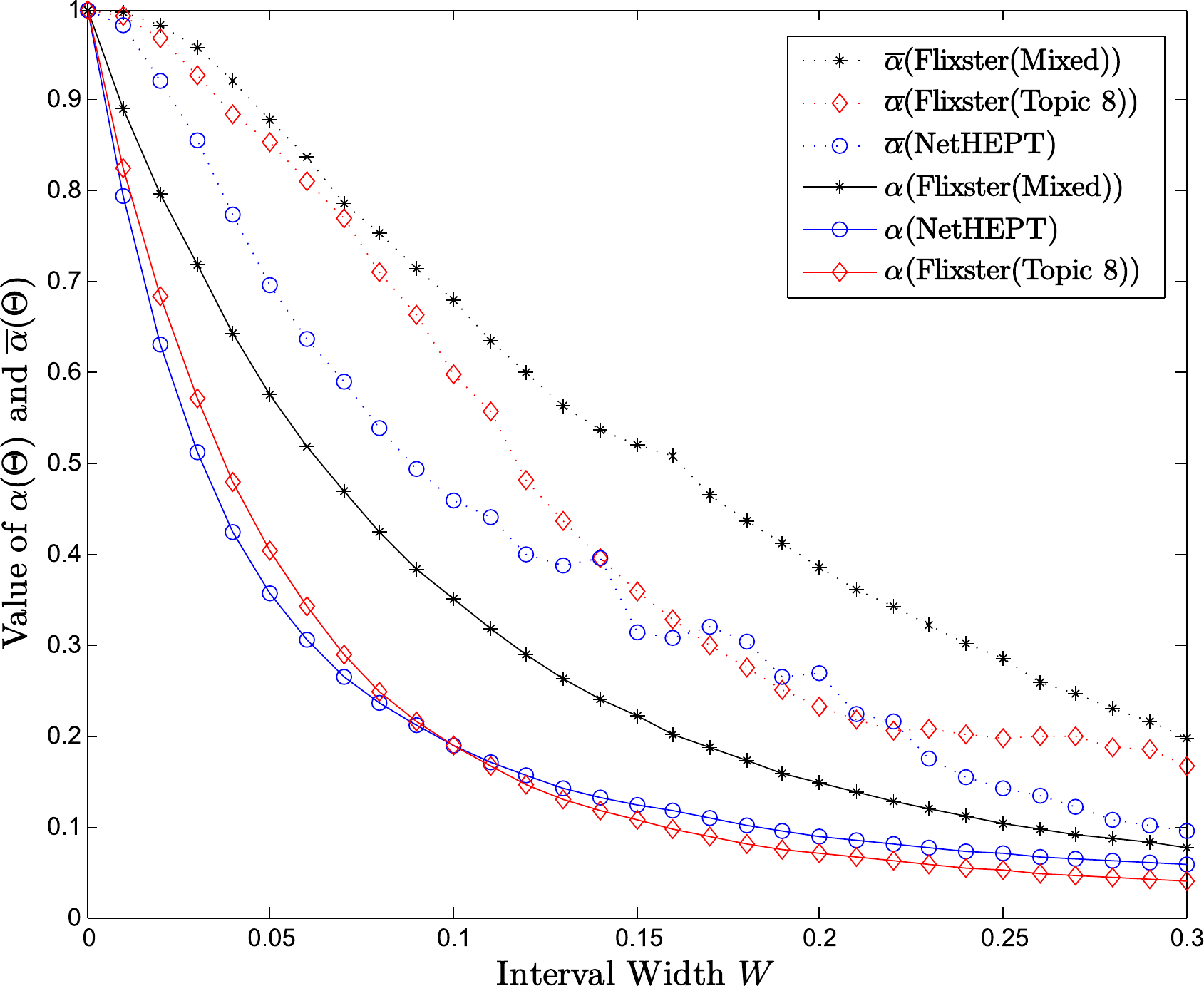}
	\caption{$\alpha(\Theta)$ and $\bar{\alpha}(\Theta)$
		for different widths of confidence interval $W$.}
	\label{fig1}
\end{figure}

First, we observe that as the parameter space $\Theta$ becomes wider,
the value of both $\alpha(\Theta)$ and $\bar{\alpha}(\Theta)$ become smaller,
which matches our intuition that larger uncertainty results in worse
robustness.
Second, there is a sharp decrease of $\alpha(\Theta)$
between $W\in [0,0.1]$ and a much slower decrease afterwards
for all three graphs.
The decrease of $\bar{\alpha}(\Theta)$ is not as sharp as that of $\alpha(\Theta)$
but the decrease also slows down with larger $W$ after $0.2$.
The overall trend of $\alpha(\Theta)$ and $\bar{\alpha}(\Theta)$
suggests that the robust ratio may be sensitive
to the uncertainty of the parameter space, and only when the uncertainty of
the parameter space reduces to a certain level that we can obtain reasonable
guarantee on the robustness of our solution.

As a comparison, we know that the average number
of samples on each edge is $318$ for the learned probabilities in the
Flixster dataset.
This corresponds to an average interval width of
0.293 for topic 8 and 0.265 for the mixed topic.
At these interval widths, $\alpha(\Theta)$ values are approximately
$0.04$ and $0.08$ respectively for the two graphs, and
$\bar{\alpha}(\Theta)$ are approximately $0.12$ and $0.2$ respectively.
This means that, even considering the upper bound  $\bar{\alpha}(\Theta)$,
the robust ratio is pretty low, and thus the learned probabilities
reported in~\cite{barbieri2013topic} may result in quite poor performance for
robust influence maximization.

Of course, our result of $\alpha(\Theta)$ and $\bar{\alpha}(\Theta)$ is only
targeted at the robustness of our {\lugreedy} algorithm, and there could
exist better algorithm having higher robustness performance at the same
uncertainty level. 
Finding a better RIM algorithm seems to be a difficult task, and
we hope that our study could motivate more research in searching for such better
RIM algorithms.
Besides $S^\lu_{\Theta}$, we also independently test the classical
greedy seed set $S^g_{\theta}$ for $\theta=(p_e)_{e\in E}$
on the lower parameter vector $\theta^-$
(that is $\frac{\sigma_{\theta^-}(S^g_{\theta})}{\sigma_{\theta^+}(S^g_{\theta^+})}$ versus $\alpha(\Theta)$),
and the average performance on each data point is $2.45\%$, $1.05\%$, $6.11\%$ worse than $S^\lu_{\Theta}$
for Flixster(Mixed), Flixster(Topic~8) and NetHEPT, respectively.
Therefore, it shows that $S^\lu_{\Theta}$ outperforms $\sigma^g_{\theta}$ in the worse-case scenario, 
and henceforth we only use $S^\lu_{\Theta}$ in the following experiments.


\subsubsection{Results for Sampling algorithms}

Figures~\ref{fig2}, \ref{fig3} and \ref{fig4} reports the result of
$\alpha=\alpha(\Theta)$ and $\bar{\alpha}=\bar{\alpha}(\Theta)$ 
for the three tested graphs respectively, 
when the average number of samples per edge increases. 
For better presentation, we trim all figures as long as 
$\alpha(\text{\USRIM}) = 0.7$. 
(For example, in Flixster(Topic 8), 
{\USRIM} requires $77318$ samples in average for $\alpha$ to reach $0.8$, 
while {\ICSRIM} only needs $33033$, and for {\OESRIM} $\alpha$ sticks to $0.118$.)

For the sampling algorithms, after the $i$-{th} iteration, the observation
set is updated from $\mathcal{M}_{i-1}$ to $\mathcal{M}_i$,
and the average number of samples per edge in the network is calculated.
Markers on each curve in these figures represent the result after one
iteration of the corresponding sampling algorithm.

\begin{figure}[t]
	\centering
	\includegraphics[scale=0.45]{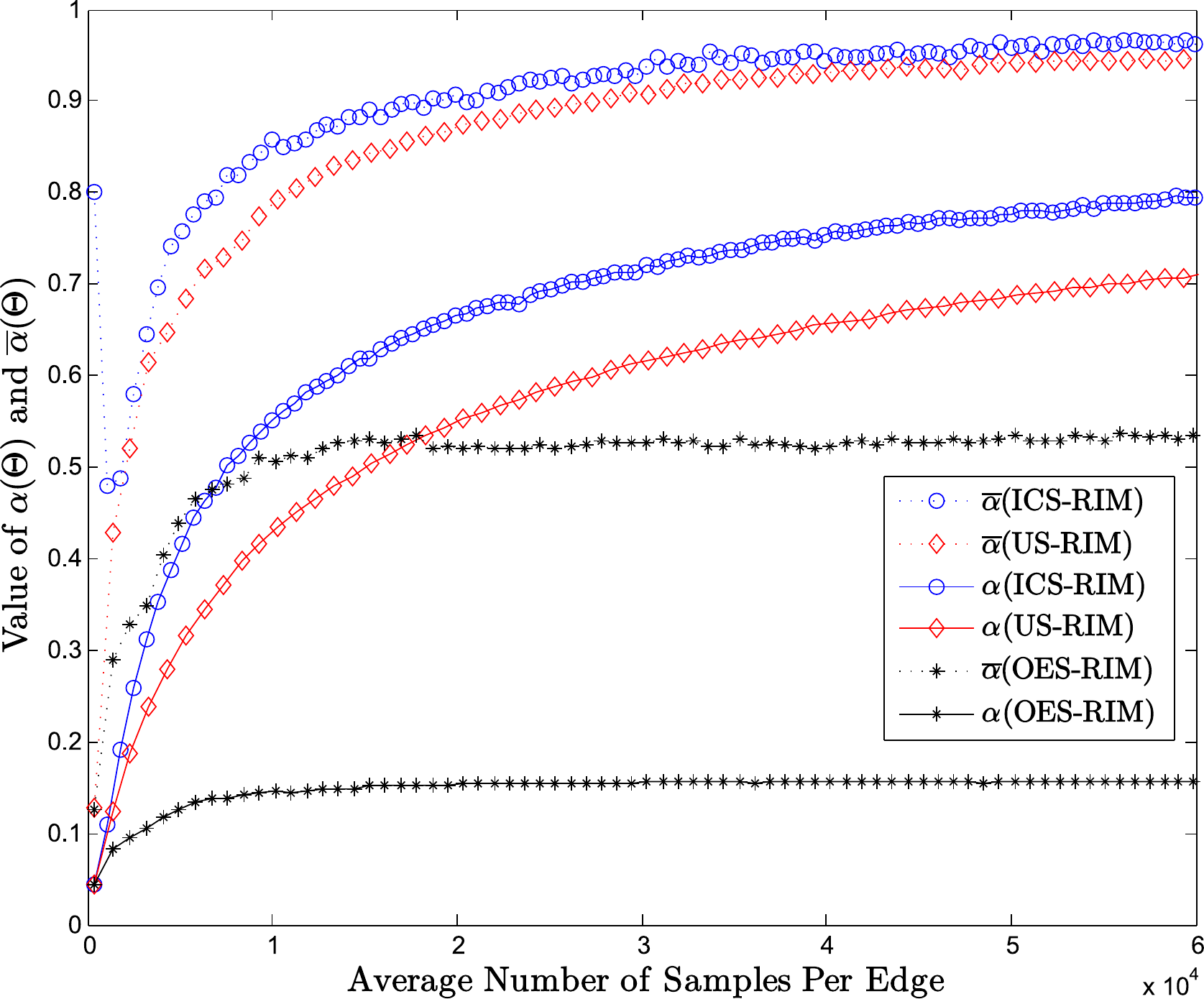}
	\caption{$\alpha(\Theta)$ and $\bar{\alpha}(\Theta)$
		for different average number of samples per edge on graph NetHEPT.}
	\label{fig2}
\end{figure}

\begin{figure}[t]
	\centering
	\includegraphics[scale=0.45]{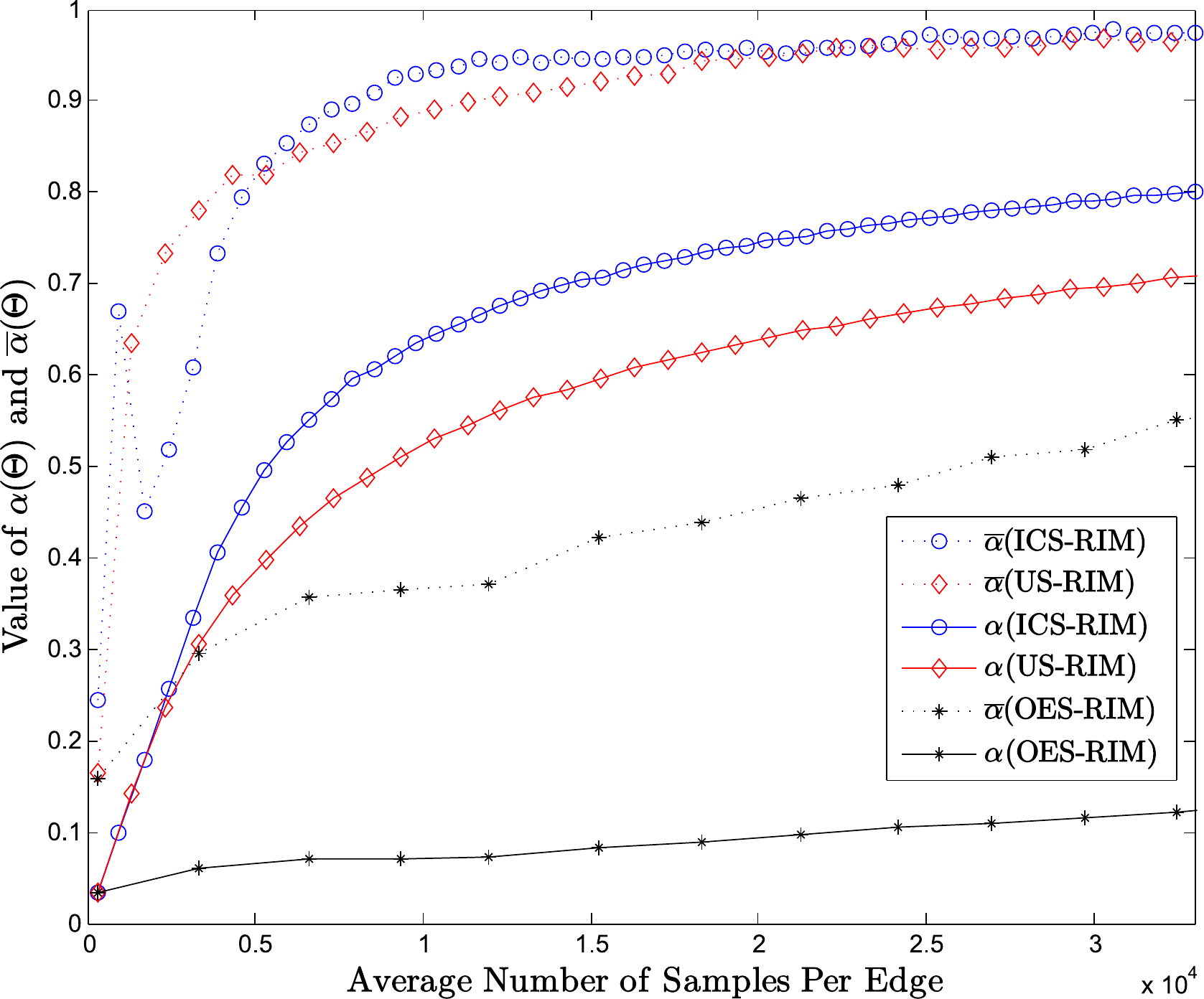}
	\caption{$\alpha(\Theta)$ and $\bar{\alpha}(\Theta)$
		for different average number of samples per edge on graph Flixster(Topic~8).}
	\label{fig3}
\end{figure}

\begin{figure}[t]
	\centering
	\includegraphics[scale=0.45]{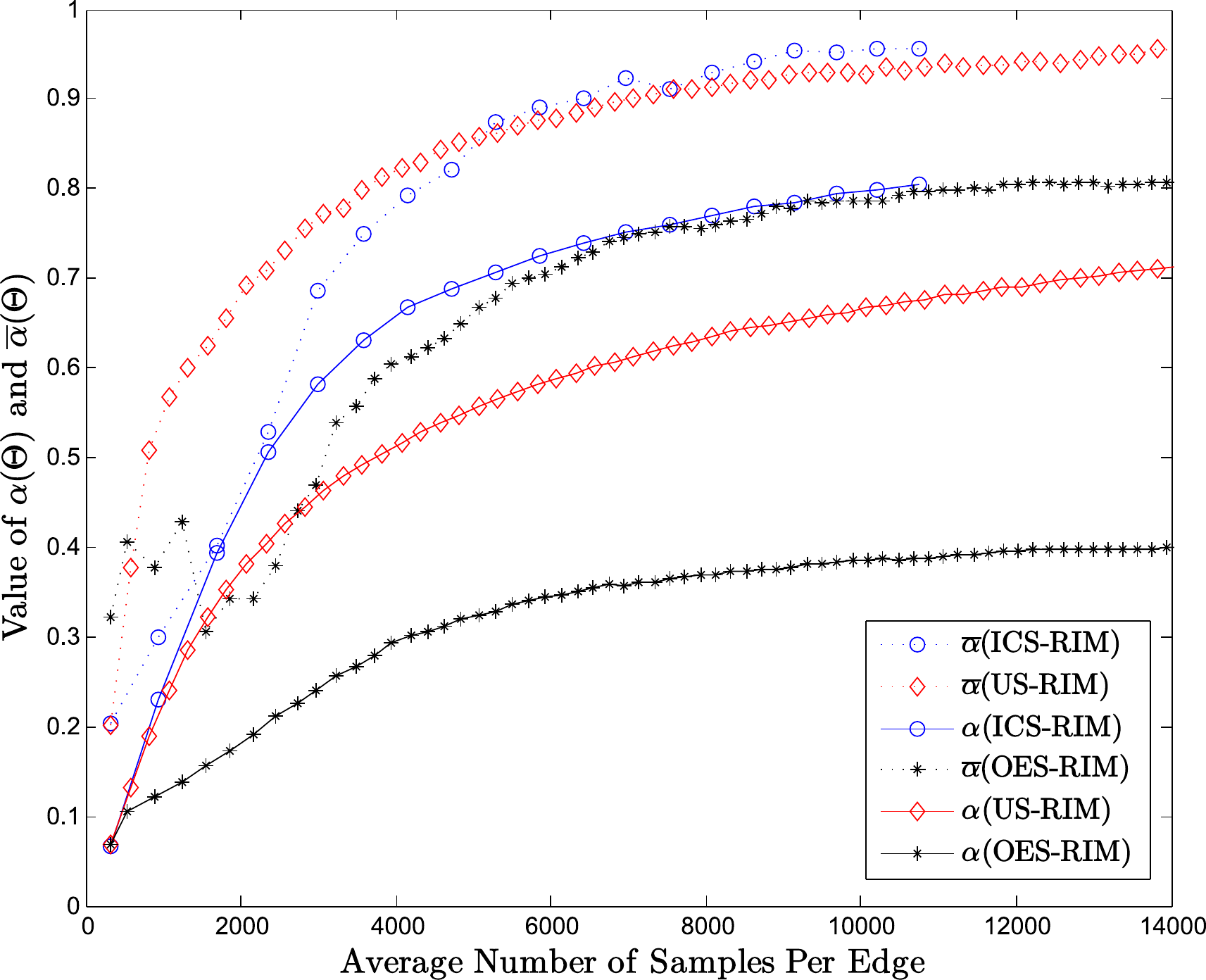}
	\caption{$\alpha(\Theta)$ and $\bar{\alpha}(\Theta)$
		for different average number of samples per edge on graph Flixster(Mixed).}
	\label{fig4}
\end{figure}

The results on all three graphs are consistent.
First, for each pair of $\alpha(\Theta)$ and $\bar{\alpha}(\Theta)$, even though
there is still some gap, 
indicating either the lower bound
or the upper bound may not be tight yet, the trends on
both $\alpha(\Theta)$ and $\bar{\alpha}(\Theta)$ are consistent:
Both increase with the number of samples, even with similar slopes at each
point; and among different
algorithms, the ranking order and relative change are consistent
with both $\alpha(\Theta)$ and $\bar{\alpha}(\Theta)$.
All these consistency suggests that gap ratio $\alpha(\Theta)$ could be used
as an indicator for the robustness of Algorithm {\lugreedy}, and
it is reasonable to use $\alpha(\Theta)$ in comparing the performance
of different algorithms.

Second, comparing the performance of three algorithms, we see that
both {\USRIM} and {\ICSRIM} are helpful in improving the robust ratio of
the selected seed set, and {\ICSRIM} is better than {\USRIM}, especially
when the sample size increases.
The baseline algorithm {\OESRIM}, however, performs significantly poorer
than the other two, even though it is also an adaptive algorithm as
{\ICSRIM}.
The reason is that, the lower-upper greedy set $S^{\lu}_{\Theta}$ changes little
after a certain number of iterations in {\OESRIM},
and thus only a small number of edges (out edges of $S^{\lu}_{\Theta}$)
are repeatedly sampled.
The probabilities on these edges are already estimated very accurately
while other edge probabilities are far from accurate.
It is the inaccurate edges that make $\alpha(\Theta)$ and the best robust ratio small.
In contrast, {\ICSRIM} uses information cascades to sample
not only edges directly connecting to the seed set but also edges
that can be potentially reached.
This suggests that it is important for a sampling method to balance the
sampling between critical edges and other potentially useful edges
in order to achieve better robustness in influence maximization.

Overall, the results suggest that information cascade based sampling method
stands out as a competitive choice when we can adaptively sample more
edges to achieve better robustness.
If adaptive sampling is not possible, predetermined uniform sampling
may also perform reasonably well.

\section{Conclusion}  \label{sect:conclusion}

In this paper, we propose the study of robust influence maximization to address
the impact of uncertainty in edge probability estimates that would inevitably occur
in practice to the influence maximization task.
We propose the {\lugreedy} algorithm with a proven solution-dependent bound,
and further propose sampling methods, in particular information cascade
based adaptive sample method to effectively reduce the uncertainty and
increase the robustness of the {\lugreedy} algorithm.
The experimental results validate the usefulness of the {\lugreedy} algorithm
and the information cascade based sampling method {\ICSRIM}.
Moreover, the results indicate that robustness may be sensitive to the uncertainty
of parameter space, and learning algorithms may need more data to achieve
accurate learning results for robust influence maximization.

Our work opens up a number of research directions.
First, it is unclear what could be the upper bound of the best robust ratio given
an actual network and learned parameter space.
Answering this question would help us to understand whether robust
influence maximization is intrinsically
difficult for a particular network or it is just our algorithm that does not
perform well.
If it is the latter case, then an important direction is to
design better robust influence maximization algorithms.
Another direction is how to improve sampling methods and learning methods
to achieve more accurate parameter learning, which seems to be crucial
for robust influence maximization.
In summary, our work indicates a big data challenge on social influence research
--- the data on social influence analysis is still not big enough,
such that the uncertainty level in model learning may result in
poor performance for influence maximization.
We hope that our work could encourage further researches to meet this
challenge from multiple aspects including
data collection, data analysis, and algorithm design.

\section*{Acknowledgment} \label{sect:acknowledge}
The research of Wei Chen is partially supported by the National
Natural Science Foundation of China (Grant No. 61433014).

\bibliographystyle{abbrv}
\bibliography{sigproc}

\appendix

\section{Proof of Theorem~3} 

\begin{proof}

{\flushleft \em (Case~\ref{pro:3-ratios-1}): }
Let $G$ be an $n$-clique and $\Theta = \times_{e\in E} [0, 1]$, i.e., for every edge $e$, $l_e=0$ and $r_e=1$.
For arbitrary set $S=\{v_1,\cdots, v_k\}$, there exists a valid parameter vector $\theta = (p_e)_{e \in E} \in \Theta$, where $p_e=0$ for all $E_S=\{e=(u,v)\mid u\in S \text{ or } v\in S\}$ and $p_e=1$ for all $e\notin E_S$.
Then, $\sigma_{\theta}(S)=k$ and $\sigma_{\theta}(S^{*}_{\theta})=n-1$,
which implies that
$g(\Theta, S) =
	\min_{\theta \in \Theta} \frac{\sigma_{\theta}(S)}{\sigma_{\theta}(S^{*}_{\theta})}
	\leq \frac{k}{n-1}
$.
For any set $S$ of size $k$, the above holds, thus we can conclude that
\[
	\max_{|S|=k}g(\Theta,S) = O\left(\frac{k}{n}\right) \mbox{.}
\]

{\flushleft \em (Case~\ref{pro:3-ratios-2}): }
Consider graph $G=(V,E)$ such that $V=A\cup B, |A|=|B|=\frac{n}{2}$ and $E=\{(u,v)\mid u,v\in A \text{ or }u,v\in B\}$, and let $E(A)$ be the set of edges with two endpoints in $A$ and $E(B)$ defined similarly. The problem is to find a single seed ($k=1$) such that the influence spread is maximized. Let $p=\frac{2}{n}$ and the input instance is $l_e=p-\epsilon$ and $r_e=p+\epsilon$ for every edge $e$ such that $[l_e,r_e]$ covers the critical interval of Erd\H{o}s-R\'{e}nyi random graph with $\frac{n}{2}$ nodes.

Now since every node is seemingly the same for any algorithm, suppose the algorithm chooses a seed $u\in A$,
then consider the worst-case $\theta$ where for every $e\in E(A)$, $p_e=l_e$ and for every $e\in E(B)$, $p_e=r_e$.
It can be figured out that the optimal solution is an arbitrary node $v\in B$.
Since $\sigma_{\theta}(\{u\}) = O(\log n)$ and $\sigma_{\theta}(\{v\})=\Theta(n)$, then the ratio $r = O(\frac{\log n}{n})$.

{\flushleft \em (Case~\ref{pro:3-ratios-3}): }
Consider graph $G=(V,E)$ such that
	$V$ is composed of disjoint sets $A_1, A_2,\ldots, A_{\sqrt{n}}$ where each $|A_i|=\sqrt{n}$,
and
	$E=\{(u,v)\mid u,v\in A_i, \forall i=1,\cdots,\sqrt{n}\}$.
Let $E(A_i)$ be the set of edges with two endpoints in $A_i$.
The problem is to find a single seed ($k=1$) such that the influence spread is maximized.
Let $p=\frac{1}{\sqrt{n}}$, and the input instance is $l_e=p-\epsilon$ and $r_e=p+\epsilon$ for every edge $e$
such that $[l_e,r_e]$ covers the critical interval of Erd\H{o}s-R\'{e}nyi random graph with $\sqrt{n}$ nodes.
Now every node appears to be symmetric from the input.

Denote $q_{i}$ as the probability of choosing a node in $A_{i}$.
Consider any distribution assigned on
$A_{1}, A_{2},\ldots,A_{\sqrt{n}}$, i.e. $q_1+q_2+\cdots+q_{\sqrt{n}}=1$,
and let the random seed set be $\tilde{S}$.
Without loss of generality, let $q_1$ be the smallest one. Then consider the worst-case $\theta$ where for every $e\in E(A_1)$, $p_e=r_e$ and for every $e\in E(A_i),i\ge 2$, $p_e=l_e$. It is obvious that the optimal solution $S_{\theta}^{*}$ is an arbitrary point $v\in A_1$.
Since
\[
	\E\left[ \sigma_{\theta}(\tilde{S})\right] \le \frac{1}{\sqrt{n}} \cdot \sqrt{n} + \left( 1-\frac{1}{\sqrt{n}} \right) O(\log \sqrt{n})=O(\log n) \mbox{,}
\]
and
\[
	\sigma_{\theta}(S_{\theta}^{*})=\Theta(\sqrt{n}) \mbox{,}
\]
which completes the proof.
\end{proof}

\section{Proof of Lemmas}

\begin{proof}[(Lemma~\ref{lem:mul})]
Since when $\sigma_{\theta}(S)$ is regarded as a function on $\theta$ (if $S$ is fixed), it is monotonically increasing,
thus it suffices to consider the case that $\forall e\in E$, $r_e=(1+\lambda)l_e$.


Flipping coins for every edge according to the probability parameter $\theta$, and we have a live-edge (random) graph $L$.
Let $E(L)$ denote the set of edges in $L$, and $\Pr_{\theta}[L]$ be the probability yielding $L$.
We use $R_L(S)$ to denote the reachable set from $S$ in $L$. Then, the influence spread function has a linear form as follows,
\[
\sigma_{\theta}(S)=\sum_L\Pr_{\theta}[L]\cdot|R_L(S)| \mbox{.}
\]

As a convention, for any edge $e\in E$, we denote conditional probability
$\Pr_{\theta}[L|e] = \Pr_{\theta}\left[L|e\in E(L)\right]$,
and
$\Pr_{\theta}[L|\bar{e}] = \Pr_{\theta}\left[L|e\notin E(L)\right]$.
Then, we have
\begin{align*}
&\frac{\sigma_{\theta^+}(S)}{\sigma_{\theta^-}(S)} \\
=&\frac{\displaystyle\sum_{L:e\in E(L)}r_e |R_L(S)| \Pr_{\theta^{+}}{[L|e]} + \sum_{L:e\not\in E(L)}(1 - r_e) |R_L(S)|\Pr_{\theta^{+}}{[L|\bar{e}]}}
	{\displaystyle\sum_{L:e\in E(L)}l_e |R_L(S)| \Pr_{\theta^{-}}{[L|e]} + \sum_{L:e\not\in E(L)}(1 - l_e) |R_L(S)| \Pr_{\theta^{-}}{[L|\bar{e}]}} \mbox{.}
\end{align*}

When we fixed $l_{e'}$ for all $e'\not=e$, we have
\[
\frac{\sigma_{\theta^+}(S)}{\sigma_{\theta^-}(S)}=\frac{Al_e+B}{Cl_e+D} \mbox{,}
\]
where $A,B,C,D$ are not dependent on $l_e$. It can be observed that the ratio is monotone with $l_e$, and is thus maximized either when $l_e=0$ or when $l_e=\frac{1}{1+\lambda}$.

Similar analysis for other edges, we can conclude that when the ratio is maximized, it must holds that $\forall e\in E$, $l_e=0$ or $l_e=\frac{1}{1+\lambda}$.
Since when $l_e=0$, it holds that $r_e=0$, thus we can just delete this edge from the graph.
Delete all such edges, and it ends up with a graph $G_1=(V,E_1)$ such that the probability interval on every edge is $[\frac{1}{1+\lambda},1]$.
And it can be seen that $R_{G_1}(S)$ is determined when probability on all edges are $1$.


Given set $S$, denote the influence spread for any graph $G$ under any parameter vector $\theta$ as $\sigma_{\theta}^{G}(S)$ explicitly.
If there exists a directed cycle $v_0\to v_1\to\cdots \to v_i\to v_0$ in graph $G_1$. 
Then it can be seen that either all nodes in this cycle is in $R_{G_1}(S)$, or none of them is in. 
In both cases, we can remove some edge (e.g. $v_i\to v_0$) from $E_1$ and obtain a new graph $G_2$ (e.g. $G_2=(V,E_1 \setminus \{(v_i,v_0)\})$) such that $\sigma_{\theta^+}^{G_1}(S)=\sigma_{\theta^+}^{G_2}(S)$ while $\sigma_{\theta^-}^{G_1}(S)\geq\sigma_{\theta^-}^{G_2}(S)$. Thus,
\[
\frac{\sigma_{\theta^+}^{G_1}(S)}{\sigma_{\theta^-}^{G_1}(S)}\leq \frac{\sigma_{\theta^+}^{G_2}(S)}{\sigma_{\theta^-}^{G_2}(S)} \mbox{.}
\]

Removing can be done since if none of the nodes are in $R_{G_1}(S)$, then deleting one edge will not change either $\sigma_{\theta^+}^{G_1}(S)$ or $\sigma_{\theta^-}^{G_1}(S)$, and if all of the nodes are in, then there must exists $v_0$ in the cycle such that $v_0\in S$ or $v_0$ can be reached from a path directing from some node (in $S$) outside the cycle to it, then deleting the edge $(v_p,v_0)$ can be proved to satisfy the above property.

Repeat deleting edges until the remaining graph is a directed acyclic graph (DAG), denoted by $G'$. Then it can be split into finite subgraphs $T_1,T_2,\ldots,T_j$
where each $T_i$ is a connected DAG, and it is immediate that
\[
\frac{\sigma_{\theta^+}^{G'}(S)}{\sigma_{\theta^-}^{G'}(S)}\le \max_{1\le i\le j}\frac{\sigma_{\theta^+}^{T_i}(S)}{\sigma_{\theta^-}^{T_i}(S)} \mbox{.}
\]

It remains to analyze the ratio in a connected DAG $T_i$, and we need more notations before that. First, the DAG $T_i$ naturally induces a topological order on nodes (we can therefore call the nodes in $T_i$ be $V(T_i):=\{v_1,\cdots,v_{|T_i|}\}$), in which every edge in $E(T_i)$ is directing from a node with smaller order to a larger order. Let $S_i=S\cap V(T_i)$, and let $R(S_i)$ be the subset of nodes in $V(T_i)$ that is reachable with positive probability (therefore $R(S_i)$ naturally contains nodes in $S_i$). Besides, for any $v\notin S_i$, let $d(S_i,v)$ denotes the length of shortest path directing from some node in $S_i$, and for any $v\in S_i$, define $d(S_i,v)=0$. Thus,
\[
\sigma_{\theta^+}^{T_i}(S_i)=|R(S_i)| \mbox{.}
\]

Let $\beta=\frac{1}{1+\lambda}$. For any path of length $l \geq 0$ from $S_i$ to $v$, the activating probability of
that path is $\beta^l$ under $\theta^-$. Then, we have
\[
\begin{aligned}
\sigma_{\theta^{-}}^{T_i}(S_i)&=|S_i|+\sum_{v\in V(T_i) \setminus S_i}\Pr\left[\text{v is reached}\right]\\
&\geq |S_i|+\sum_{v\in R(S_i) \setminus  S_i}\beta^{d(S_i,v)}\\
&\geq \sum_{v \in R(S_i)} \beta^{d(S_i,v)}\\
&\geq |R(S_i)| \beta^{|T_i|} \mbox{.}
\end{aligned}
\]

Therefore,
\[
\frac{\sigma_{\theta^+}^{G}(S)}{\sigma_{\theta^-}^{G}(S)}
\leq \frac{\sigma_{\theta^+}^{G'}(S)}{\sigma_{\theta^-}^{G'}(S)}
\leq \max_{1\le i\le p}\frac{\sigma_{\theta^+}^{T_i}(S)}{\sigma_{\theta^-}^{T_i}(S)}
\leq \max_{1\le i\le p} \beta^{-|T_i|}
\leq (1+\lambda)^{n} \mbox{.}
\]

To prove the second inequality of this lemma, by definition we have
	$S_{\theta}^{*}=\arg\max_{|S|\leq k}\sigma_{\theta}(S)$.
Note that for all $\theta\in\Theta$,
	$\sigma_{\theta^{+}}(S_{\theta^{+}}^{*})\geq\sigma_{\theta}(S_{\theta}^{*})$.
Then we have
\begin{align*}
\max_{|S|\leq k}\min_{\theta\in\Theta}\frac{\sigma_{\theta}(S)}{\sigma_{\theta}(S_{\theta}^{*})}&\geq
\max_{|S|\leq k}\min_{\theta\in\Theta}\frac{\sigma_{\theta}(S)}{\sigma_{\theta^{+}}(S_{\theta^{+}}^{*})}\\
&=\frac{\sigma_{\theta^{-}}(S_{\theta^{-}}^{*})}{\sigma_{\theta^{+}}(S_{\theta^{+}}^{*})}\\
&\geq \frac{\sigma_{\theta^{-}}(S_{\theta^{+}}^{*})}{\sigma_{\theta^{+}}(S_{\theta^{+}}^{*})}\\
&\geq \min_{|S|\subseteq V}\frac{\sigma_{\theta^{-}}(S)}{\sigma_{\theta^{+}}(S)}\\
&\geq \frac{1}{(1+\lambda)^{n}} \mbox{.}
\end{align*}

This completes the proof for Lemma~\ref{lem:mul}.
\end{proof}

\begin{proof}[(Lemma~\ref{lem:conf})]

First, we focus on one fixed edge $e$. According to Chernoff bound, we have
\[ 
	\Pr[|\hat{p}_e - p_e|\le p_e\delta] \ge 1 - 2e^{-\frac{1}{3}\delta^2p_et_e}. 
\]
Let $\gamma = 2me^{-\frac{1}{3}\delta^2p_et_e}$, and
	$\delta = \sqrt{\frac{3}{t_e}\ln\frac{2m}{\gamma}}\frac{1}{\sqrt{p_e}} = \frac{c_e}{\sqrt{p_e}}$.
Then, with probability no less than $1 - \frac{\gamma}{m}$,
we see that $p_e$ should satisfy the constraint
\[ 
	|\hat{p}_e - p_e|\le c_e\sqrt{p_e}, 
\]
thus we have
\[ 
	\hat{p}_e + \frac{c_e^2}{2} - c_e\sqrt{\frac{c_e^2}{4} + \hat{p}_e}\le p_e \le \hat{p}_e + \frac{c_e^2}{2} + c_e\sqrt{\frac{c_e^2}{4} + \hat{p}_e}. 
\]
By definition of $l_e$, $r_e$ and the fact that $p_e \in [0,1]$,
therefore we have
\[
	\Pr[l_e \le p_e \le r_e] \ge 1 - \frac{\gamma}{m}.
\]
By union bound, we can conclude that
\[
	\Pr[l_e \le p_e \le r_e,\, \forall e\in E] \ge 1 - \gamma,
\]
which completes the proof of Lemma~\ref{lem:conf}.
\end{proof}


\section{Proof of Theorem~6} 

\begin{proof}

{\flushleft \em Setting~\ref{thm-add-case}:}
First, since every $e$ is probed for $t=\frac{2m^2n^2 \ln \frac{2m}{\gamma}}{k^2\epsilon^2}$ times, using the additive form of Chernoff-Hoeffding Inequality we have
\[
\Pr\left[ \abs{\frac{1}{t}\sum_{i=1}^{t}x^i_e-p_e} >\frac{k\epsilon}{2mn} \right] \le 2\exp\left(-\frac{k^2\epsilon^2}{2m^2n^2}\cdot t\right)\le \frac{\gamma}{m} \mbox{.}
\]
Then by union bound, it holds that
\[
\Pr\left[\forall e\in E, \abs{\frac{1}{t}\sum_{i=1}^{t} x^i_e-p_e} >\frac{k\epsilon}{2mn}\right]\le \gamma \mbox{.}
\]

For every $e\in E$, we set $l_e=\frac{1}{t}\sum_{i=1}^{t}x^i_e-\frac{k\epsilon}{2mn}$, and $r_e=\frac{1}{t}\sum_{i=1}^{t}x^i_e+\frac{k\epsilon}{2mn}$, then with probability $\ge 1-\gamma$, it holds that $\theta\in \Theta$.

Therefore, for every $S$, according to Lemma~\ref{lem:add},
\begin{align*}
\frac{\sigma_{\theta^{-}}(S)}{\sigma_{\theta^{+}}(S)}&
\ge 1-\frac{\sigma_{\theta^{+}}(S)-\sigma_{\theta^{-}}(S)}{\sigma_{\theta^{+}}(S)}\\
&\ge 1-\frac{mn\cdot\frac{k\epsilon}{mn}}{k}\\
&=1-\epsilon \mbox{.}
\end{align*}
Thus, $\frac{\sigma_{\theta^{-}}(S^g_{\theta^+})}{\sigma_{\theta^{+}}(S^g_{\theta^+})} \geq 1-\epsilon$ 
also holds.

Now, since we use $S^\lu_\Theta$ as the solution, applying \Cref{thm:main}, we have
\begin{align*}
g(\Theta,S^\lu_\Theta) 
	\ge \alpha(\Theta)\left(1-\frac{1}{e}\right)
	= \frac{\sigma_{\theta^{-}}(S^\lu_\Theta)}{\sigma_{\theta^{+}}(S_{\theta^{+}}^g)} \left(1-\frac{1}{e}\right) \\
	\ge \frac{\sigma_{\theta^{-}}(S^g_{\theta^+})}{\sigma_{\theta^{+}}(S^g_{\theta^+})}  \left(1-\frac{1}{e}\right)
	\ge \left( 1-\epsilon \right) \left(1-\frac{1}{e}\right),
\end{align*}
where the second inequality holds due to $\sigma_{\theta^{-}}(S^\lu_\Theta) \ge \sigma_{\theta^{-}}(S^g_{\theta^+})$
 by definition of \eqref{def:lu-greedy-solution}.

{\flushleft \em Setting~\ref{thm-mul-case}:}
%
%
%
%
Denote $a = \frac{\ln \frac{1}{1-\epsilon}}{2n + \ln \frac{1}{1-\epsilon}}$ for convenience.
Since every edge $e$ is probed for $t=\frac{3 \ln \frac{2m}{\gamma} }{p a^2} \geq \frac{3 \ln \frac{2m}{\gamma} }{p_e a^2}$ times,
the probability of upper and lower tails derived by the multiplicative form of Chernoff-Hoeffding Inequality is
\begin{align*}
& \Pr\left[\frac{1}{t}\sum_{i=1}^{t}x^i_e \ge (1+a)p_e\right]\le e^{-\frac{a^2}{3}\cdot p_e t} \le \frac{\gamma}{2m}
\\
& \Pr\left[\frac{1}{t}\sum_{i=1}^{t}x^i_e \le (1-a)p_e\right]\le e^{-\frac{a^2}{3}\cdot p_e t} \le \frac{\gamma}{2m} \mbox{.}
\end{align*}
Then by union bound, it holds that
\[
	\Pr\left[\forall e\in E, \frac{1}{1+a}\frac{\sum_{i=1}^{t} x^i_e}{t} \le p_e \le \frac{1}{1-a} \frac{\sum_{i=1}^{t} x^i_e}{t} \right] \ge 1 - \gamma \mbox{.}
\]

Now suppose the above bound is satisfied.
For every edge $e \in E$, let $r_e = (1+a) p_e$ and $l_e = (1-a) p_e$.
Then, we have
$r_e \leq \frac{1+a}{1-a}  \frac{\sum_{i=1}^{t}x^i_e}{t} \le (1 + \frac{1}{n} \ln \frac{1}{1-\epsilon}) \frac{\sum_{i=1}^{t}x^i_e}{t}$.
On the other hand, it is easy to check that $r_e = \frac{1+a}{1-a} l_e \le (1 + \frac{1}{n} \ln \frac{1}{1-\epsilon}) l_e$.
According to Lemma~\ref{lem:mul}, for any set $S$,
\begin{align*}
\frac{\sigma_{\theta^{-}}(S)}{\sigma_{\theta^{+}}(S)}&
\ge \left( 1+\frac{1}{n} \ln\frac{1}{1-\epsilon} \right)^{-n}
\ge 1-\epsilon \mbox{.}
\end{align*}
Thus, $\frac{\sigma_{\theta^{-}}(S^g_{\theta^+})}{\sigma_{\theta^{+}}(S^g_{\theta^+})} \geq 1-\epsilon$ 
also holds. Similar to Setting~\ref{thm-add-case}, 
then we can apply Theorem~\ref{thm:main} to derive the theorem.
\end{proof}

\end{document}